\newtheorem{proposition}{Proposition}
\theoremstyle{remark}
\newtheoremstyle{remboldstyle}
{}{}{\itshape}{}{\bfseries}{.}{.5em}{{\thmname{#1 }}{\thmnumber{#2}}{\thmnote{ (#3)}}}
\theoremstyle{remboldstyle}
\begin{document}
\title{Data Sketching for Large-Scale Kalman Filtering}

\author{Dimitris Berberidis,~\IEEEmembership{Student Member,~IEEE,}
and Georgios B. Giannakis*,~\IEEEmembership{Fellow,~IEEE}%
\thanks{Work in this paper was supported by the ARO grant no. W911NF-15-1-0492
and NSF grants 1343860, 1442686, and 1500713. }
\thanks{D. Berberidis and G. B. Giannakis are with the ECE Dept., University of Minnesota, Minneapolis, MN 55455, USA. E-mails:\{bermp001,georgios\}@umn.edu}
}

\markboth{IEEE TRANSACTIONS ON SIGNAL PROCESSING (revised, \today)}{Berberidis, Kekatos, and Giannakis: Online Censoring for Large-Scale Regressions with Application to Streaming Big Data}

\maketitle
\begin{abstract}
In an age of exponentially increasing data generation, performing inference tasks by utilizing the available information in its entirety is not always an affordable option. The present paper puts forth approaches to render tracking of large-scale dynamic processes via a Kalman filter affordable, by processing a reduced number of data. Three distinct methods are introduced for reducing the number of data involved in the correction step of the filter. Towards this goal, the first two methods employ random projections and innovation-based censoring to effect dimensionality reduction and measurement selection respectively. The third method achieves reduced complexity by leveraging sequential processing of observations and selecting a few informative updates based on an information-theoretic metric. Simulations on synthetic data, compare the proposed methods with competing alternatives, and corroborate their efficacy in terms of estimation accuracy over complexity reduction. Finally, monitoring large networks is considered as an application domain, with the proposed methods tested on Kronecker graphs to evaluate their efficiency in tracking traffic matrices and time-varying link costs.  

\end{abstract}


\begin{keywords}
tracking, dimensionality reduction, censoring, random projections, Kalman filter, traffic matrix.
\end{keywords}

\section{Introduction}

Tracking nonstationary dynamic processes is of paramount importance in various applications. In the context of big data, being able to perform accurate and economical state estimation may render problems of prohibitive scale feasible. Weather prediction is an example of tracking a slowly-varying dynamic process, from a massive volume of observations acquired from fast-sampling sensors per time interval; see e.g., \cite{louka2008improvements}. Monitoring large and dynamically evolving networks, where nodes may join or leave and connections may be established or lost as time progresses, provides an exciting domain in which the acquisition and
processing of network-wide performance metrics becomes challenging as the network size increases~\cite[Ch. 8]{kolaczyk2014statistical}. For instance, monitoring path metrics
such as delays or loss rates is challenging primarily because
the number of paths generally grows as the square of the
number of nodes in the network. Therefore, measuring and
storing the delays of all possible origin-destination pairs is
hard in practice, even for moderate-size networks~\cite{kolaczyk2014statistical}.  

In this context, efforts to reduce the number of measurements used for tracking have pursued two different directions. The first is that of optimal experimental design (OED), where the goal is to perform \emph{model-driven} sensor selection based on ensemble performance metrics (e.g. the trace of the error covariance). Channel-aware dimensionality reduction of observations was reported in~\cite{zhu2009power} and~\cite{ma2014distributed} using distributed wireless sensor networks (WSNs). Optimal and near optimal sensor schedules for a finite time horizon estimation was dealt with in~\cite{vitus2012efficient}, while entropy- and mutual-information-based sensor selection were advocated in~\cite{wang2004entropy} and~\cite{ertin2003maximum}. A posterior-CRLB-based method to select sensors for tracking was introduced in~\cite{zuo2007posterior}, via convex optimization in~\cite{boyd} and~\cite{altenbach2012strategies}, while a greedy algorithm leveraging submodularity was developed in~\cite{shamaiah2010greedy} for measurement selection in sequential estimation. The latter has also been advocated as a means of reducing the complexity of Kalman filters that operate with limited processing resources~\cite{anytimeKF}. OED is nicely attuned for designing low-dimensional observation models, but it is \emph{data-agnostic} and thus sub-optimal when observations become available and need to be reduced.
 
The second direction is that of \emph{data-driven} methods that select available measurements for processing. 
Specifically, censoring has recently been employed to select data for distributed parameter estimation using resource-constrained WSNs, thus trading off performance for tractability~\cite{tsp2012ggeric,tsp2013youxie}. Furthermore, censoring has been proposed for signal estimation using WSNs, for tracking, and control of dynamical processes~\cite{liu2014survey,battistelli2012data,zheng2014sequential,arxiv2014wang}. However, existing works on censoring mainly aim at reducing the rate at which sensors communicate their observations, and pertinent methods exhibit large computational complexity and storage requirements, which can be possibly afforded only at the fusion center. 

The goal of this paper is to perform reliable tracking using the Kalman filter (KF), while reducing the amount of data and the computational complexity involved. To this end, the first two methods employ random projections and innovation-based censoring to respectively effect dimensionality reduction and measurement selection. The third method reduces complexity by leveraging sequential processing of observations when the noise is uncorrelated, and by selecting a few informative updates based on an information-theoretic metric. Finally, an efficient backward smoothing method is developed to mitigate the performance degradation caused by dimensionality reduction.  Corroborating simulations compare with state-of-the-art greedy measurement selection algorithms, and illustrate the efficacy of the novel schemes. To demonstrate the applicability of the proposed update selection approach on real-world problems, traffic matrix estimation and network link cost estimation is also considered.  

The rest of the paper is organized as follows. Section \ref{sec:problem} introduces the proposed model of reduced complexity KF. Sections \ref{sec:rpkf} and \ref{sec:AC-KF} present the two dimensionality reduction modules based on RPs and censoring, respectively. The proposed update selection method is introduced in Section \ref{sec:rc_KF}. Numerical experiments are in Section~\ref{sec:simulations}, while Section~\ref{sec:networks} includes experiments on network monitoring. Finally, concluding remarks are given in Section~\ref{sec:conclusion}.  

\emph{Notation.} Lower- (upper-) case boldface letters denote column vectors (matrices). Calligraphic symbols are reserved for sets, while $^T$ stands for transposition. Vectors $\mathbf{0}$, $\mathbf{1}$, and $\mathbf{e}_n$ denote the all-zeros, the all-ones, and the $n$-th canonical vector, respectively. Symbol $\mathds{1}_{E}$ denotes the indicator for the event $E$. Notation $\mathcal{N}(\mathbf{m},\mathbf{C})$ stands for the multivariate Gaussian distribution with mean $\mathbf{m}$ and covariance matrix $\mathbf{C}$, while $\mathrm{tr(\mathbf{X})}$, $\lambda_{\min}(\mathbf{X})$, and $\lambda_{\max}(\mathbf{X})$ are reserved for the trace, the minimum and maximum eigenvalues of matrix $\mathbf{X}$, respectively.  {Symbol $D\gg$ is used to denote that the number of observations $D$ is ``prohibitively large'' relative to the problem at hand, as well as the computing platform.}

\section{Problem Statement and Preliminaries}\label{sec:problem}
Consider the following linear dynamical system model
\begin{align}\label{model2a}
	\boldsymbol{\theta}_n&= \mathbf{F}_n\boldsymbol{\theta}_{n-1}+\mathbf{G}_n\mathbf{u}_n+\mathbf{w}_n\\
	\mathbf{y}_n&= \mathbf{X}_n\boldsymbol{\theta}_n + \mathbf{v}_n \label{model2b}
\end{align}
where $\boldsymbol{\theta}_n\in\mathbb{R}^{p}$ denotes the state vector at time $n$; $\mathbf{F}_n$ is the known state-transition matrix; $\mathbf{G}_n$ and $\mathbf{u}_n$ are known, deterministic control-input model and control-input vector respectively; $\mathbf{y}_n\in\mathbb{R}^{D}$ the measurement vector, and $\mathbf{X}_n$ is the known $D\times{p}$ measurement matrix; while $\mathbf{w}_n$ and $\mathbf{v}_n$ are zero-mean, mutually uncorrelated and individually uncorrelated across time random noise vectors, with respective covariance matrices $\mathbf{Q}_n$ and $\mathbf{R}_n$. The initial state $\boldsymbol{\theta}_0$ has mean $\mathbf{m}_0$, and covariance $\mathbf{P}_0$.
\begin{figure}[t!]
	\centering
	\centerline{\includegraphics[width=1\linewidth, height=1 in]{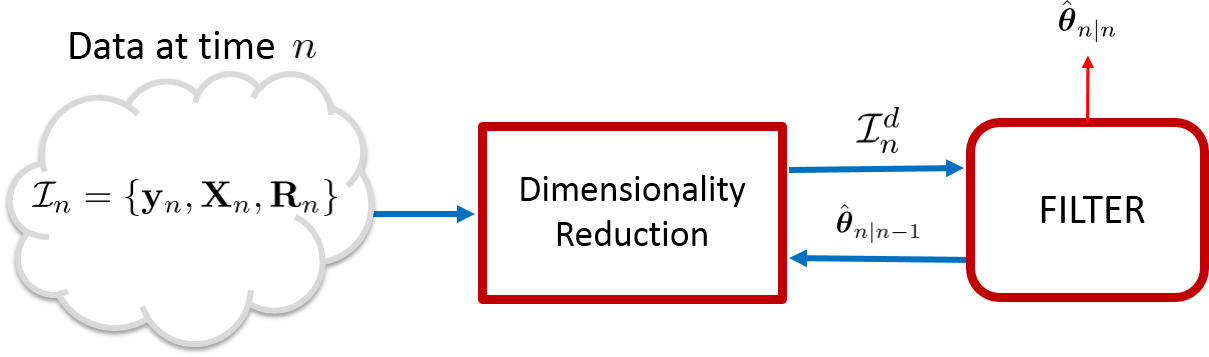}}
	\caption{Reduced-dimension filtering.}
	\label{fig:DR}
	\vspace{-0.2 cm}
\end{figure}

Given the information-bearing data $\mathcal{I}_n:=\{\mathbf{y}_n,\mathbf{X}_n,\mathbf{R}_n\}$ of the measurement~\eqref{model2b} at time $n$, the most recent estimate $\hat{\boldsymbol{\theta}}_{n-1|n-1}$ and its covariance matrix $\mathbf{P}_{n-1|n-1}$, the celebrated KF yields the minimum mean-square error (MMS- E) optimal estimate $\hat{\boldsymbol{\theta}}_{n|n}$ in two steps. First, the state prediction $\hat{\boldsymbol{\theta}}_{n|n-1}$ and its covariance matrix $\mathbf{P}_{n|n-1}$ are obtained using the model dynamics $\{\mathbf{F}_n, \mathbf{Q}_n \}$ as [cf.~\eqref{model2a}]
\begin{subequations}
	\label{pred}
	\begin{align*}
		\hat{\boldsymbol{\theta}}_{n|n-1}&=\mathbf{F}_n\hat{\boldsymbol{\theta}}_{n-1|n-1}+\mathbf{G}_n\mathbf{u}_n\\
		\mathbf{P}_{n|n-1}&=\mathbf{F}_n\mathbf{P}_{n-1|n-1}\mathbf{F}_n^T+\mathbf{Q}_{n}.
	\end{align*}
\end{subequations}
Subsequently, as $\mathcal{I}_n$ becomes available, $\hat{\boldsymbol{\theta}}_{n|n}$ is obtained as   
\begin{align}\label{costf}
	\hat{\boldsymbol{\theta}}_{n|n}=\arg\min_{\boldsymbol{\theta}}\|\mathbf{y}_n-\mathbf{X}_n\boldsymbol{\theta}\|_{\mathbf{R}_n^{-1}}^2+\|\boldsymbol{\theta}-\hat{\boldsymbol{\theta}}_{n|n-1}\|_{\mathbf{P}_{n|n-1}^{-1}}^2.
\end{align}
The first term of the cost in~\eqref{costf} is a weighted least-squares term fitting the state $\boldsymbol{\theta}$   with $\mathcal{I}_n$ that arises from the linear observation model in~\eqref{model2b}; while the second regularization term corresponds to treating $\hat{\boldsymbol{\theta}}_{n|n-1}$ as a prior of ${\boldsymbol{\theta}}_{n}$. Solving~\eqref{costf} and applying the matrix inversion lemma (MIL) yields the well known KF correction step, e.g.,~\cite[p. 205]{bar2004estimation}
\begin{align*}
	\hat{\boldsymbol{\theta}}_{n|n}&=\hat{\boldsymbol{\theta}}_{n|n-1}+\mathbf{K}_n({\mathbf{y}}_n-{\mathbf{X}}_n\hat{\boldsymbol{\theta}}_{n|n-1})
\end{align*}
where the so-termed KF gain $\mathbf{K}_n$ and the state covariance update are given by
\begin{align*}
	\mathbf{K}_n&=\mathbf{P}_{n|n-1}{\mathbf{X}}_n^T\left({\mathbf{X}}_n\mathbf{P}_{n|n-1}{\mathbf{X}}_n^T+{\mathbf{R}}_n\right)^{-1}\\
	\mathbf{P}_{n|n}&=\left(\mathbf{I}_p-\mathbf{K}_n{\mathbf{X}}_n\right)\mathbf{P}_{n|n-1}.
\end{align*}
A dual form of the KF known as the information filter (IF) relies on the MIL to offer a more efficient solver of~\eqref{costf} as $D$ grows large~\cite[Ch. 7]{bar2004estimation}. Nevertheless, even the low-complexity IF requires $\mathcal{O}(Dp^2)$ multiplications to solve~\eqref{costf} in the case of uncorrelated observations ($\mathbf{R}_n$ diagonal), and $\mathcal{O}(D^2p)$ in general. Therefore, for large-scale KF problems where $D\gg$, dimensionality reduction of the datasets $\mathcal{I}_n$ is well motivated for rendering the solution of~\eqref{costf} computationally tractable, while also reducing other data-related costs, such as storage and transmission.

Towards this goal, we introduce a reduced-complexity Kalman-like filter (see Algorithm~\ref{reducedKF}) that extracts a reduced (size $d<D$), yet informative dataset $\mathcal{I}_n^d:=\{\check{\mathbf{y}}_n,\check{\mathbf{X}}_n,\check{\mathbf{R}}_n\}$ from the original $\mathcal{I}_n$, where $\check{\mathbf{y}}_n\in\mathbb{R}^d,\check{\mathbf{X}}_n\in\mathbb{R}^{d\times{p}}$ and $\check{\mathbf{R}}_n\in\mathbb{R}^{d\times d}$ are the corresponding reduced-dimension observation vector, measurement matrix, and covariance matrix; see also Fig.~\ref{fig:DR}. Consequently, the problem reduces to the design of low-complexity sketching modules for informative dimensionality reduction. In the ensuing two sections, a \emph{data-agnostic} method based on RPs followed by a \emph{data-adaptive} method based on censoring are developed.

\begin{algorithm}[t!]
	\caption{Reduced-dimension KF}\label{reducedKF}
	\begin{algorithmic}
		\State \underline{\textbf{Initialization:}} $\hat{\boldsymbol{\theta}}_{0|0}=\mathbf{m}_0,~\mathbf{P}_{0|0}=\mathbf{P}_0$
		\For {$n=1:N$}
		\State \underline{\textbf{Prediction Step}}
		\State \hspace{0.07 cm}$\hat{\boldsymbol{\theta}}_{n|n-1}=\mathbf{F}_n\hat{\boldsymbol{\theta}}_{n-1|n-1}+\mathbf{G}_n\mathbf{u}_n$
		\State $\mathbf{P}_{n|n-1}=\mathbf{F}_n\mathbf{P}_{n-1|n-1}\mathbf{F}_n^T+\mathbf{Q}_{n}$
		
		\State \underline{\textbf{Data Reduction}}
		\State $\{\check{\mathbf{y}}_n,\check{\mathbf{X}}_n,\check{\mathbf{R}}_n\}=\mathrm{Sketching}\big(\{\mathbf{y}_n,\mathbf{X}_n,\mathbf{R}_n\},~\hat{\boldsymbol{\theta}}_{n|n-1} \big)$
		
		\State \underline{\textbf{Correction Step}}
		\State \hspace{0.07 cm}$\hat{\boldsymbol{\theta}}_{n|n}=\hat{\boldsymbol{\theta}}_{n|n-1}+\check{\mathbf{K}}_n(\check{\mathbf{y}}_n-\check{\mathbf{X}}_n\hat{\boldsymbol{\theta}}_{n|n-1})$
		\State \hspace{0.2 cm}$\check{\mathbf{K}}_n=\mathbf{P}_{n|n-1}\check{\mathbf{X}}_n^T\left(\check{\mathbf{X}}_n\mathbf{P}_{n|n-1}\check{\mathbf{X}}_n^T+\check{\mathbf{R}}_n\right)^{-1}$
		\State $\mathbf{P}_{n|n}=\left(\mathbf{I}_p-\check{\mathbf{K}}_n\check{\mathbf{X}}_n\right)\mathbf{P}_{n|n-1}$
		\EndFor
	\end{algorithmic}
\end{algorithm}

\begin{algorithm}[t!]%
	\vspace{0.5 cm}
	\caption{RP sketching module}\label{RPKF}
	\begin{algorithmic}
		\State {Dimensionality reduction with RPs}
		\State \hspace{0.1 cm}$\check{\mathbf{y}}_n=\mathbf{S}_d\mathbf{H\Lambda y}_n$
		\State $\check{\mathbf{X}}_n=\mathbf{S}_d\mathbf{H\Lambda X}_n$
		\State $\check{\mathbf{R}}_n=\mathbf{S}_d\mathbf{H\Lambda R}_n(\mathbf{S}_d\mathbf{H\Lambda})^T	$	
	\end{algorithmic}
\end{algorithm}

\section{RP-based KF}\label{sec:rpkf}

RP-based dimensionality reduction amounts to premultiplying measurements and regressors $\{\mathbf{y}_n, \mathbf{X}_n\}$ with a random matrix $\mathbf{H}$, and a diagonal matrix $\mathbf{\Gamma}$, whose entries take the values $\{+1/\sqrt{D},-1/\sqrt{D}\}$ equiprobably. The net result is a linear transformation of the measurement equations so that all rows convey ``comparable information''. A subset of $d$ rows of the transformed system is then extracted by simple random sampling, implemented by left multiplication with a random $d\times{D}$ selection matrix $\mathbf{S}_d$. 

Originally developed for linear regressions~\cite{mahoney2011ftml,dmmm06acm,woodruff}, the novelty here is RP-based reduced-dimensionality tracking of dynamical processes. Applying the Hadamard preconditioning and random sampling matrices on~\eqref{model2b} yields the reduced-dimension observation model        
\begin{equation*}
\check{\mathbf{y}}_n:=\mathbf{S}_d\mathbf{H}\mathbf{\Gamma}\mathbf{y}_n=\mathbf{S}_d\mathbf{H}\mathbf{\Gamma}(\mathbf{X}_n\boldsymbol{\theta}_n+\mathbf{v}_n)=\check{\mathbf{X}}_n\boldsymbol{\theta}_n+\check{\mathbf{v}}_n
\end{equation*}
where $\check{\mathbf{v}}_n:=\mathbf{S}_d\mathbf{H}\mathbf{\Gamma}\mathbf{v}_n$ is zero mean with covariance $\check{\mathbf{R}}_n=\mathbf{S}_d\mathbf{H}\mathbf{\Gamma}\mathbf{R}_n(\mathbf{S}_d\mathbf{H}\mathbf{\Gamma})^T$. Given $\hat{\boldsymbol{\theta}}_{n|n-1}$ and the reduced data $\mathcal{I}_n^d$, state estimate $\hat{\boldsymbol{\theta}}_{n|n}$ can be obtained as [cf. ~\eqref{costf}] 
\begin{align}\label{costfrp}
\hat{\boldsymbol{\theta}}_{n|n}=\arg\min_{\boldsymbol{\theta}}\|\check{\mathbf{y}}_n-\check{\mathbf{X}}_n\boldsymbol{\theta}\|_{\check{\mathbf{R}}_n^{-1}}^2+\|\boldsymbol{\theta}-\hat{\boldsymbol{\theta}}_{n|n-1}\|_{\mathbf{P}_{n|n-1}^{-1}}^2.
\end{align}
Solving~\eqref{costfrp} and applying the MIL  yields the novel RP-based KF, which is summarized as Algorithm~\ref{reducedKF} using Algorithm~\ref{RPKF} as sketching module. 

Implementing RPs can have affordable $\mathcal{O}(Dp\log d)$ complexity if $\mathbf{H}$ is chosen to be a pseudo-random Hadamard matrix of size $D=2^{\nu}$ for $\nu\in\mathbb{Z}_+$. Different from the more elaborate approaches in \cite{zhu2009power} and \cite{ma2014distributed}, the proposed RP-KF is an easy-to-implement, ``one-size-fits-all'' reduced-complexity tracker, using data-agnostic dimensionality reduction. Furthermore, RP-KF's estimation performance can be guaranteed as asserted in the ensuing proposition, which provides a benchmark for the data-driven methods introduced in the following section.  
\begin{proposition}\label{pro:rps}
	{With $\mathbf{R}_n=\sigma_n^2\mathbf{I}_D$, let $\mathbf{A}_n:=[\mathbf{P}_{n|n-1}^{-1/2},$ $~~ \sigma_n^{-1}\mathbf{X}_n^T]^T$, $\mathbf{b}_n:=[\mathbf{P}_{n|n-1}^{-1/2}\hat{\boldsymbol{\theta}}_{n|n-1},~~ \sigma_n^{-1}\mathbf{y}_n^T]^T$, and $\mathbf{A}_n=\mathbf{U}_n{\mathbf{\Lambda}_n}\mathbf{V}_n^T$ the singular value decomposition of $\mathbf{A}_n$}. If $\|\mathbf{U}_n\mathbf{U}_n^T\mathbf{b}_n\|_2\geq{\gamma\|\mathbf{b}_n}\|_2$ for some $\gamma \in (0,1]$, then by choosing $d=\mathcal{O}(p\ln(pD)/\epsilon)$ the following bound for the RP-KF estimates holds w.h.p. 
	\begin{align*}
	\|\hat{\boldsymbol{\theta}}_{n|n}-\hat{\boldsymbol{\theta}}_{n|n}^{\star}\|_2\leq\sqrt{\epsilon}\left(\kappa(\mathbf{A}_n)\sqrt{\gamma^{-2}-1}\right)\|\hat{\boldsymbol{\theta}}_{n|n}^{\star}\|_2
	\end{align*}
	where $\kappa(\mathbf{A}_n)$ denotes the condition number of $\mathbf{A}_n$, and $\hat{\boldsymbol{\theta}}_{n|n}^{\star}$ is the full-data KF estimate.
\end{proposition}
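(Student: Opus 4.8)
The plan is to recognize the full-data correction \eqref{costf} and the RP-based correction \eqref{costfrp} as an \emph{ordinary} least-squares problem and its \emph{sketched} counterpart, and then invoke the randomized-Hadamard least-squares machinery for which the stated bound is the canonical guarantee. First I would verify that, under $\mathbf{R}_n=\sigma_n^2\mathbf{I}_D$, the full-data estimate is $\hat{\boldsymbol{\theta}}_{n|n}^{\star}=\arg\min_{\boldsymbol{\theta}}\|\mathbf{A}_n\boldsymbol{\theta}-\mathbf{b}_n\|_2^2$: by the stacked definitions, $\|\mathbf{A}_n\boldsymbol{\theta}-\mathbf{b}_n\|_2^2=\|\mathbf{P}_{n|n-1}^{-1/2}(\boldsymbol{\theta}-\hat{\boldsymbol{\theta}}_{n|n-1})\|_2^2+\sigma_n^{-2}\|\mathbf{X}_n\boldsymbol{\theta}-\mathbf{y}_n\|_2^2$, which matches \eqref{costf} term-by-term. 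I would then argue that the RP-KF estimate solves the sketched problem $\min_{\boldsymbol{\theta}}\|\mathbf{\Theta}(\mathbf{A}_n\boldsymbol{\theta}-\mathbf{b}_n)\|_2^2$ for $\mathbf{\Theta}:=\mathbf{S}_d\mathbf{H}\mathbf{\Gamma}$. The key simplification is that, since $\mathbf{H}$ is (up to scaling) orthogonal and $\mathbf{\Gamma}^2\propto\mathbf{I}_D$, one gets $\mathbf{\Theta}\mathbf{\Theta}^T\propto\mathbf{I}_d$, so the whitening weight $\check{\mathbf{R}}_n^{-1}$ in \eqref{costfrp} collapses to a scaled identity and the data term of \eqref{costfrp} reduces to a scaled $\|\mathbf{\Theta}\sigma_n^{-1}(\mathbf{y}_n-\mathbf{X}_n\boldsymbol{\theta})\|_2^2$.

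Next I would invoke the structural sketching result. For $d=\mathcal{O}(p\ln(pD)/\epsilon)$ the subsampled randomized Hadamard transform applied to the $p$-dimensional column space of $\mathbf{A}_n$ simultaneously furnishes, w.h.p., (i) a subspace embedding, i.e.\ the singular values of $\mathbf{\Theta}\mathbf{U}_n$ are bounded away from zero ($\sigma_{\min}^2(\mathbf{\Theta}\mathbf{U}_n)\geq 1/\sqrt{2}$, say), and (ii) an approximate-orthogonality bound $\|(\mathbf{\Theta}\mathbf{U}_n)^T\mathbf{\Theta}(\mathbf{I}-\mathbf{U}_n\mathbf{U}_n^T)\mathbf{b}_n\|_2^2\leq(\epsilon/2)Z^2$, where $Z:=\|(\mathbf{I}-\mathbf{U}_n\mathbf{U}_n^T)\mathbf{b}_n\|_2$ is the optimal residual. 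Feeding (i)--(ii) into the normal equations of the sketched problem and dividing through by $\sigma_{\min}(\mathbf{A}_n)$ yields the standard intermediate bound $\|\hat{\boldsymbol{\theta}}_{n|n}-\hat{\boldsymbol{\theta}}_{n|n}^{\star}\|_2\leq\sqrt{\epsilon}\,Z/\sigma_{\min}(\mathbf{A}_n)$.

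Finally I would convert this residual bound into the stated relative bound by purely geometric estimates. The hypothesis $\|\mathbf{U}_n\mathbf{U}_n^T\mathbf{b}_n\|_2\geq\gamma\|\mathbf{b}_n\|_2$ gives $Z\leq\|\mathbf{b}_n\|_2\sqrt{1-\gamma^2}$, while $\|\hat{\boldsymbol{\theta}}_{n|n}^{\star}\|_2\geq\|\mathbf{A}_n\hat{\boldsymbol{\theta}}_{n|n}^{\star}\|_2/\sigma_{\max}(\mathbf{A}_n)=\|\mathbf{U}_n\mathbf{U}_n^T\mathbf{b}_n\|_2/\sigma_{\max}(\mathbf{A}_n)\geq\gamma\|\mathbf{b}_n\|_2/\sigma_{\max}(\mathbf{A}_n)$, where I used that $\mathbf{A}_n\hat{\boldsymbol{\theta}}_{n|n}^{\star}$ is the projection $\mathbf{U}_n\mathbf{U}_n^T\mathbf{b}_n$. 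Dividing the two estimates and using $\sqrt{1-\gamma^2}/\gamma=\sqrt{\gamma^{-2}-1}$ gives $Z/\sigma_{\min}(\mathbf{A}_n)\leq\kappa(\mathbf{A}_n)\sqrt{\gamma^{-2}-1}\,\|\hat{\boldsymbol{\theta}}_{n|n}^{\star}\|_2$, which combined with the intermediate bound completes the proof.

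The hard part will be reconciling the sketch actually implemented with the off-the-shelf theorem: the RP-KF sketches only the $D$ data rows while keeping the $p$ prior rows of $\mathbf{A}_n$ exact, whereas the textbook statement sketches the whole augmented system. I would handle this by observing that retaining rows exactly only helps, since $\|\tilde{\mathbf{A}}_n\boldsymbol{\theta}\|_2^2=\|\mathbf{P}_{n|n-1}^{-1/2}\boldsymbol{\theta}\|_2^2+\|\mathbf{\Theta}\sigma_n^{-1}\mathbf{X}_n\boldsymbol{\theta}\|_2^2$, so a subspace embedding of $\range(\mathbf{X}_n)$ already embeds $\range(\mathbf{A}_n)$; the delicate bookkeeping is re-deriving the approximate-orthogonality condition (ii) for the partially-sketched system and confirming that the sampling rate $d=\mathcal{O}(p\ln(pD)/\epsilon)$ still suffices for the effectively $(D+p)$-dimensional transform.
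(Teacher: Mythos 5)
Your proposal takes essentially the same route as the paper: the paper's entire proof is a one-line appeal to Theorem 2 of the cited Drineas et al.\ ``faster least squares'' paper applied to the stacked system $(\mathbf{A}_n,\mathbf{b}_n)$, and your argument is precisely a reconstruction of that theorem's proof---the least-squares reformulation of \eqref{costf}, the two SRHT structural conditions, and the $\gamma$/$\kappa(\mathbf{A}_n)$ conversion to a relative error bound. The partial-sketching discrepancy you flag at the end (RP-KF leaves the $p$ prior rows of $\mathbf{A}_n$ unsketched while sketching only the $D$ data rows) is genuine, but the paper's one-line proof silently glosses over it, so on that point your treatment is if anything more careful than the original.
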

\begin{proof}
	See Appendix 1.
\end{proof}
Proposition~\ref{pro:rps} asserts that, per time slot $n$, the estimate $\hat{\boldsymbol{\theta}}_{n|n}$ of the RP-KF can be guaranteed to be close enough, in the relative squared-error sense, to the estimate $\hat{\boldsymbol{\theta}}_{n|n}^{\star}$ of the full KF, if the reduced dimension $d$ is chosen to be large enough. {Note that Proposition 1 only provides a \emph{per-step} error guarantee, meaning that the error between the estimates of the RP-KF and the full-data KF at slot $n$ is bounded, given that the two filters share a common estimate at slot $n-1$. Bounding the RP-KF error across multiple time slots is a more challenging task that goes beyond the scope and claims of the present paper.} Naturally, the quality of the approximation also depends on other parameters such as the observation matrix, noise variance and covariance of prediction. Nevertheless, being data-agnostic and requiring storage and processing of $\mathcal{I}_n$ in batch form per time slot renders the RP-KF less attractive in practice, and motivates the algorithms presented in the following two sections. 

\section{CENSORING-BASED KF}\label{sec:AC-KF}

Measurement censoring for estimating dynamical processes has been advocated as a means of reducing the inter-sensor transmission overhead when WSNs are deployed for distributed tracking~\cite{battistelli2012data,zheng2014sequential}; see also~\cite{liu2014survey,wang1997event}, where censoring is employed for event-based estimation. Since the goal in the aforementioned applications is saving communication resources, censoring is performed solely on measurements $\mathbf{y}_n$, with $\mathbf{X}_n$ and $\mathbf{R}_n$ assumed known and used even for the censored entries of $\mathbf{y}_n$; {thus,~\cite{battistelli2012data,liu2014survey,wang1997event,tsp2013youxie}, and \cite{zheng2014sequential} rely on reducing the dimensionality of a dataset that only consists of observations; that is, $\mathcal{I}_n:=\{\mathbf{y}_n\}$}. A subset of $d$ observations $\mathcal{I}^d_n:=\{\mathbf{y}_{n,{\mathcal{S}_n}}\}$ is obtained, where $y_{n,i}$ is the $i-$th entry of $\mathbf{y}_n$, and $\mathcal{S}_n\subseteq\{1,\ldots,D\}$  denotes a set collecting the indices of uncensored observations. Given $\mathbf{y}_{n,{\mathcal{S}_n}}, \mathbf{X}_n$ and $\mathbf{R}_n$,~\cite{battistelli2012data,liu2014survey,wang1997event,tsp2013youxie,zheng2014sequential} develop sequential estimators to optimally estimate $\boldsymbol{\theta}_n$. Targeting reduction of communication load, optimal (in the maximum likelihood or MMSE sense) estimation from (un)censored observations comes with complexity comparable to that of using the full set of $D$ measurements. For our big-data setups, this is not affordable.

Since the aim is dimensionality and complexity reduction, the starting point is on censoring entire rows of the full dataset $\mathcal{I}^D_n:=\{\mathbf{y}_n,\mathbf{X}_n,\mathbf{R}_n\}$, in order to obtain a reduced set $\mathcal{I}^d_n:=\{\mathbf{y}_{n,{\mathcal{S}_n}},\mathbf{X}_{n,{\mathcal{S}_n}},\mathbf{R}_{n,{\mathcal{S}_n}}\}$, where $\mathbf{x}_{n,i}^T$ denotes the $i-$th row of $\mathbf{X}_n$ and $\mathbf{R}_{n,{\mathcal{S}_n}}:=\mathrm{cov}(\mathbf{v}_{n,{\mathcal{S}_n}})$. The goal here is to develop censoring rules in order to obtain $\mathcal{S}_n$, so that $\mathcal{I}^d_n$ is an ``informative'' subset of $\mathcal{I}_n$. Most existing censoring schemes adopt the innovation $\tilde{\mathbf{y}}_{n}:=\mathbf{y}_n-\mathbf{X}_n\hat{\boldsymbol{\theta}}_{n|n-1}$ as a measure of information contained in $\mathbf{y}_n$.

 One approach --henceforth termed \emph{block censoring} (BC)-- is to censor the entire vector $\mathbf{y}_n$. From an information-theoretic viewpoint~\cite{zheng2014sequential}, the optimal BC rule relies on the magnitude of the prewhitened innovation $\boldsymbol{\Sigma}^{-1/2}_{n}\tilde{\mathbf{y}}_{n}$, where $\boldsymbol{\Sigma}_n:=\mathrm{cov}(\tilde{\mathbf{y}}_{n})=\mathbf{X}_n\mathbf{P}_{n|n-1}\mathbf{X}_n^T+\mathbf{R}_n$; thus, $\mathcal{S}_n$ is found as 
\begin{align}\label{block_censoring}
\mathcal{S}_n:=\bigg\{
\begin{array}{lc}
\{1, \ldots, D\},& \|\boldsymbol{\Sigma}^{-1/2}_{n}\tilde{\mathbf{y}}_{n}\|_2>\tau_n \\
\emptyset,& \mathrm{otherwise} 
\end{array}.
\end{align} 
Clearly, having $\mathcal{S}_n=\emptyset$ corresponds to skipping the correction step of the KF. A major shortcoming of~\eqref{block_censoring} is the cubic complexity $\mathcal{O}(D^3)$ associated with inverting $\boldsymbol{\Sigma}_{n}$. Furthermore, BC-KF can only reduce the data cost \emph{on average} across iterations by entirely skipping correction steps.

Our idea of a more attractive alternative is to possibly censor separately each entry of $\mathcal{I}_n$. 
Such an entry-wise censoring rule yields $\mathcal{S}_{n}$ as
{\begin{align}\label{entry_censoring}
\mathcal{S}_n:=\{ 1\leq i\leq D~ \big{|}~|\sigma_{n,i}^{-1}\tilde{{y}}_{n,i}|>\tau_n\}
\end{align}  
where $\sigma_{n,i}^2=[\mathbf{R}_{n}]_{ii}$, and $\tau_n$ can be tuned so that the set cardinality $|\mathcal{S}_n|\approx{d}$. Note that normalization in (6) includes pre-whitening of the innovation vector that is effected through left multiplication with $\mathbf{\Sigma}_n^{-1/2}$. In contrast, any low-complexity \emph{per-entry adaptive} censoring rule  cannot explicitly consider the cross-correlation between different measurements, meaning that $\mathbf{\Sigma}_n$ is not utilized. Instead, \eqref{entry_censoring} uses the diagonal entries of $\mathbf{R}_n$ as an approximate measure of the per-entry innovation variance and therefore it serves as a normalization factor.  Compared to BC-KF, the innovation-based entry-wise rule of~\eqref{entry_censoring} is more flexible in reducing the available data, since it can censor any subset of observations at slot $n$ at much lower complexity. Nevertheless, to accurately perform measurement selection with \eqref{entry_censoring}, $|\tilde{{y}}_{n,i}|$ must reflect how informative ${{y}}_{n,i}$ is for the purpose of tracking $\boldsymbol{\theta}_{n}$. Using for this purpose the entry-wise predictor-based innovations $\tilde{y}_{n,i}:=y_{n,i}-\mathbf{x}_{n,i}^T\hat{\boldsymbol{\theta}}_{n|n-1}$ is a possibility, but turns out to be unsuitable for the proposed reduced-complexity KF, due to the fact that censoring rule \eqref{entry_censoring} tends to yield ``biased'' observations for a given  $\hat{\boldsymbol{\theta}}_{n|n-1}$. Correction of this bias is possible through the incorporation of a maximum likelihood criterion (see, e.g. \cite{BerberKekatosTSP}). Since such an approach requires the additional knowledge of $\mathbf{X}_n$ and incurs computational complexity at least as high as that of the full-data KF, it is only suitable for reducing the communication overhead. }

Targeting a more suitable censoring rule, the \emph{adaptive censoring} least mean-square (AC-LMS) algorithm we introduced in \cite{BerberKekatosTSP} for non-dynamical regressions can be employed to discard uninformative rows of $\mathcal{I}_n$. Within time slot $n$, rows of $(\mathbf{y}_n, \mathbf{X}_n)$ are processed sequentially; given a temporary estimate $\hat{\boldsymbol{\theta}}_{n|n-1,i}$, the $i-$th row is discarded when indicated so by the censoring variable ($\mathds{1}_{.}$ denotes the indicator function)
{\begin{equation}\label{cis} 
c_i:=\mathds{1}{\{|{y_{n,i}-\mathbf{x}_{n,i}^T\hat{\boldsymbol{\theta}}_{n|n-1,i-1}}|\leq \tau_n\sigma_{n,i}^{-1}\}}.
\end{equation}}
Given $\{ \mathbf{x}_{n,i}^T, \hat{\boldsymbol{\theta}}_{n|n-1,i-1}, \tau_n\}$, a ``censoring slab'' is specified in $\mathbb{R}^{p+1}$ to determine whether $(\mathbf{x}_{n,i},y_{n,i})$ will be censored (if inside this slab) or not (if outside this slab). 
 If deemed informative enough $(c_i=0)$, the $i-$th row is added to $\mathcal{S}_n$, and subsequently involved in updating $\hat{\boldsymbol{\theta}}_{n|n-1,i}$ as 
\begin{equation}\label{update} 
\hat{\boldsymbol{\theta}}_{n|n-1,i}=\hat{\boldsymbol{\theta}}_{n|n-1,i-1}
+(1-c_i)\mu \mathbf{x}_{n,i} \left({y}_{n,i}-\mathbf{x}_{n,i}^T\hat{\boldsymbol{\theta}}_{n|n-1,i-1}\right).
\end{equation}
The role of the first-order update in~\eqref{update} is to perturb the censoring slab towards the direction of $(\mathbf{x}_{n,i},y_{n,i})$, thus making it less likely for future measurements conveying information ``close to'' $(\mathbf{x}_{n,i},y_{n,i})$ to be retained. {Intuitively speaking, such updates eliminate measurement redundancies and reduce estimation error due to the bias of uncensored observations.} The AC sketching module is summarized as Algorithm~\ref{ACensor}, and when plugged into Algorithm~\ref{reducedKF}, it yields the proposed adaptive censoring (AC)-KF scheme. With regards to its performance, we have the following result.
{\begin{proposition}\label{pro:bias}
   If $\mathbf{R}_n=\sigma_n^2\mathbf{I}$, $\sigma_n^2\gg\mathrm{tr}(\mathbf{P}_{n|n-1})$, and $\mathbf{w}_n$, $\mathbf{v}_n$ are zero-mean Gaussian, then the AC-KF with $\mu=0$ yields unbiased estimates $\forall{\tau}$. 
\end{proposition}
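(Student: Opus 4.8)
The plan is to show that the one-step error $\tilde{\boldsymbol{\theta}}_{n|n}:=\boldsymbol{\theta}_n-\hat{\boldsymbol{\theta}}_{n|n}$ is an \emph{odd} function of the primitive Gaussian disturbances, so that its law is symmetric about the origin and hence $E[\tilde{\boldsymbol{\theta}}_{n|n}]=\mathbf{0}$. First I would note that fixing $\mu=0$ freezes the running estimate in~\eqref{update}, giving $\hat{\boldsymbol{\theta}}_{n|n-1,i-1}=\hat{\boldsymbol{\theta}}_{n|n-1}$ for every $i$, so that the adaptive rule~\eqref{cis} collapses to the entry-wise rule~\eqref{entry_censoring}: index $i$ is retained exactly when $|\tilde{y}_{n,i}|>\tau_n\sigma_n$, where $\tilde{y}_{n,i}=\mathbf{x}_{n,i}^T\tilde{\boldsymbol{\theta}}_{n|n-1}+v_{n,i}$ is the $i$-th innovation. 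The structural fact I would exploit is that this selection depends on the innovations \emph{only through their magnitudes}.

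Next I would run a parity induction. Collect the primitive variables $\boldsymbol{\xi}:=(\boldsymbol{\theta}_0-\mathbf{m}_0,\mathbf{w}_1,\ldots,\mathbf{w}_n,\mathbf{v}_1,\ldots,\mathbf{v}_n)$, which under the Gaussian hypothesis are jointly zero-mean and hence satisfy $\boldsymbol{\xi}\stackrel{d}{=}-\boldsymbol{\xi}$. I would prove by induction on $n$ that $\tilde{\boldsymbol{\theta}}_{n|n-1}$ and $\tilde{\boldsymbol{\theta}}_{n|n}$ are odd in $\boldsymbol{\xi}$ while $\mathbf{P}_{n|n-1}$ and $\mathbf{P}_{n|n}$ are even. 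The prediction step preserves this at once, since $\tilde{\boldsymbol{\theta}}_{n|n-1}=\mathbf{F}_n\tilde{\boldsymbol{\theta}}_{n-1|n-1}+\mathbf{w}_n$ and $\mathbf{P}_{n|n-1}=\mathbf{F}_n\mathbf{P}_{n-1|n-1}\mathbf{F}_n^T+\mathbf{Q}_n$. For the correction step, the flip $\boldsymbol{\xi}\mapsto-\boldsymbol{\xi}$ sends each $\tilde{y}_{n,i}$ to $-\tilde{y}_{n,i}$ but leaves $|\tilde{y}_{n,i}|$, and therefore the retained index set $\mathcal{S}_n$, unchanged; consequently $\check{\mathbf{X}}_n$, $\check{\mathbf{R}}_n$ and the gain $\check{\mathbf{K}}_n=\mathbf{P}_{n|n-1}\check{\mathbf{X}}_n^T(\check{\mathbf{X}}_n\mathbf{P}_{n|n-1}\check{\mathbf{X}}_n^T+\check{\mathbf{R}}_n)^{-1}$ are even, whereas the retained innovation vector $\check{\mathbf{y}}_n-\check{\mathbf{X}}_n\hat{\boldsymbol{\theta}}_{n|n-1}$ flips sign. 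Substituting into $\tilde{\boldsymbol{\theta}}_{n|n}=\tilde{\boldsymbol{\theta}}_{n|n-1}-\check{\mathbf{K}}_n(\check{\mathbf{y}}_n-\check{\mathbf{X}}_n\hat{\boldsymbol{\theta}}_{n|n-1})$ shows $\tilde{\boldsymbol{\theta}}_{n|n}$ is odd and $\mathbf{P}_{n|n}=(\mathbf{I}_p-\check{\mathbf{K}}_n\check{\mathbf{X}}_n)\mathbf{P}_{n|n-1}$ is even, which closes the induction; the base case holds because the prior error $\boldsymbol{\theta}_0-\mathbf{m}_0$ is symmetric about zero (Gaussian) and $\mathbf{P}_{0|0}=\mathbf{P}_0$ is deterministic. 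Oddness against a symmetric $\boldsymbol{\xi}$ forces $E[\tilde{\boldsymbol{\theta}}_{n|n}]=\mathbf{0}$, i.e. unbiasedness.

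An equivalent, more computational route---where the hypotheses $\mathbf{R}_n=\sigma_n^2\mathbf{I}$ and $\sigma_n^2\gg\mathrm{tr}(\mathbf{P}_{n|n-1})$ enter most directly---is to decompose $\tilde{\boldsymbol{\theta}}_{n|n}=(\mathbf{I}_p-\check{\mathbf{K}}_n\check{\mathbf{X}}_n)\tilde{\boldsymbol{\theta}}_{n|n-1}-\check{\mathbf{K}}_n\check{\mathbf{v}}_n$ and treat the two terms separately. In the regime $\sigma_n^2\gg\mathrm{tr}(\mathbf{P}_{n|n-1})$ the contribution $\mathbf{x}_{n,i}^T\tilde{\boldsymbol{\theta}}_{n|n-1}$ is negligible against $v_{n,i}$, so $\mathcal{S}_n$ is driven by $\mathbf{v}_n$ alone and decouples from the current prediction error; the first term then averages to zero once the inductive unbiasedness $E[\tilde{\boldsymbol{\theta}}_{n|n-1}]=\mathbf{0}$ is invoked. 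For the second term I would condition on $\mathcal{S}_n=S$, which fixes $\check{\mathbf{K}}_n$ and renders each retained $v_{n,i}$ a zero-mean Gaussian conditioned on the symmetric event $\{|v_{n,i}|>\tau_n\sigma_n\}$, so that $E[\check{\mathbf{v}}_n\mid\mathcal{S}_n=S]=\mathbf{0}$ and the term vanishes after averaging over $S$.

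The hard part in either route is the data-dependence of the selection: because $\mathcal{S}_n$ is built from the very innovations that drive the correction, the gain $\check{\mathbf{K}}_n$ is statistically coupled to the retained residuals, and the covariance $\mathbf{P}_{n|n-1}$ is itself a random quantity propagated through the recursion. The parity bookkeeping dissolves this coupling by certifying that every ``design'' object ($\mathcal{S}_n,\check{\mathbf{X}}_n,\check{\mathbf{R}}_n,\check{\mathbf{K}}_n,\mathbf{P}_{n|n}$) is even in $\boldsymbol{\xi}$ while every ``error'' object is odd; the only real care is to check that this even/odd split survives the nonlinear thresholding and propagates across time, which is precisely what the magnitude-only dependence of~\eqref{entry_censoring} guarantees.
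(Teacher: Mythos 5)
Your proof is correct, and your primary (parity) route is genuinely different from the paper's argument; your second, ``computational'' route is essentially the paper's own proof. The paper proceeds exactly as your alternative sketch: it uses $\sigma_n^2\gg\mathrm{tr}(\mathbf{P}_{n|n-1})$ to approximate the reduced innovation covariance by $\sigma_n^2\mathbf{I}$, so the correction collapses to $\hat{\boldsymbol{\theta}}_{n|n}=\hat{\boldsymbol{\theta}}_{n|n-1}+\sigma_n^{-2}\mathbf{P}_{n|n-1}\sum_{i=1}^D\mathbf{x}_{n,i}\tilde{y}_{n,i}(1-c_{n,i})$ with a gain that no longer depends on the selected set; it then invokes induction on unbiasedness of $\hat{\boldsymbol{\theta}}_{n|n-1}$ and shows $\mathbb{E}[\tilde{y}_{n,i}(1-c_{n,i})]=0$ by the vanishing of $\int_{-\tau_n\sigma_n}^{\tau_n\sigma_n}\tilde{y}\,e^{-c\tilde{y}^2}d\tilde{y}$, i.e., the odd-integrand-over-symmetric-interval computation for the (assumed zero-mean Gaussian) innovation. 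Your parity induction accomplishes the same symmetry cancellation without ever approximating the gain: by certifying that $\mathcal{S}_n$, $\check{\mathbf{X}}_n$, $\check{\mathbf{R}}_n$, $\check{\mathbf{K}}_n$, and all $\mathbf{P}$ matrices are even functions of the stacked disturbances while the errors are odd, you dissolve precisely the gain--selection and covariance--innovation coupling that the paper's proof sidesteps (it silently pulls the random, selection-dependent $\mathbf{P}_{n|n-1}$ out of the expectation and treats $\tilde{y}_{n,i}$ as marginally Gaussian, which is only approximate once past selections are data-dependent). Consequently your argument uses neither $\mathbf{R}_n=\sigma_n^2\mathbf{I}$ nor the large-noise condition, and thus actually proves the stronger claim that the paper itself flags in the remark following Proposition~\ref{pro:bias}, namely that those assumptions ``were made to simplify the proof and are not necessary.'' What the paper's route buys is brevity and an explicit display of where the hypotheses enter; what yours buys is exactness and generality, at the mild cost of two conditions you should state explicitly: the initial error $\boldsymbol{\theta}_0-\mathbf{m}_0$ must have a symmetric law (e.g., Gaussian, which the paper also needs implicitly for its Gaussianity claim on $\tilde{y}_{n,i}$), and $\mathbb{E}[\tilde{\boldsymbol{\theta}}_{n|n}]$ must exist --- immediate here since, conditioned on each of the finitely many selection patterns, the error is an affine function of jointly Gaussian variables.
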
}
\begin{proof}
	See Appendix 3.
\end{proof}
The assumptions in Proposition~\ref{pro:bias} were made to simplify the proof and are not necessary. Extensive simulations indicate that the AC-KF remains unbiased even for low $\sigma_n^2$ and correlated noise, and also for $\mu>0$. Nevertheless, the variance of AC-KF largely depends on the choice of $\mu$. Tuning $\mu$ to optimize the MSE performance of AC-KF is a challenging task. Accurate rules for selecting $\mu$ is part of our ongoing research. However, even for possibly suboptimal values of $\mu$, the proposed scheme yields promising results. Simulations in Section~\ref{sec:simulations} will demonstrate that the proposed AC-KF attains estimation accuracy close to that of the KF using the greedy measurement selection method in~\cite{shamaiah2010greedy}. In addition, the proposed sketching module performs a single pass over the data, and requires $\mathcal{O}(Dp)$ computations, which is markedly lower than the $\mathcal{O}(Ddp^2)$ required to perform greedy selection. Furthermore, AC-KF is suitable for online implementation by processing rows of $\mathcal{I}_n$ sequentially. {Table \ref{tab:table_1} summarizes the per-slot computational complexity of applying the sketching modules corresponding to AC-KF, RP-KF and random sampling.}     

{
	\textbf{Remark 1:} Note that (9) does not pertain to a filter update. Instead, it is an update of  $\hat{\boldsymbol{\theta}}_{n|n-1,i}$ within the AC sketching module (Algorithm~\ref{ACensor}) that leverages uncensored entries per slot, and it is only used for censoring entries within $\mathcal{I}_n$ of slot $n$. Note also that for $\mu >0$ Algorithm~\ref{ACensor} performs \emph{adaptive} censoring on $\mathcal{I}_n$. If $\mu=0$, then $\hat{\boldsymbol{\theta}}_{n|n-1,i}=\hat{\boldsymbol{\theta}}_{n|n-1} \forall i\in{1,\ldots,D}$, and thus the censoring rule in (8) becomes \emph{non-adaptive} across measurements at slot $n$.
	For AC-KF with \emph{non-adaptive} censoring, scalar entries of the measurement vector $\mathbf{y}_n$ can also be censored in a decentralized fashion across distributed sensors. }

\begin{algorithm}[t!]
	\caption{AC sketching module}\label{ACensor}
	\begin{algorithmic}
		\State {Measurement selection with AC-LMS}
		\State Input: $\hat{\boldsymbol{\theta}}_{n|n-1},~~\{\mathbf{y}_n,\mathbf{X}_n,\mathbf{R}_n\}$
		\State \underline{\textbf{Initialization:}} $\hat{\boldsymbol{\theta}}_{n|n-1,0}=\hat{\boldsymbol{\theta}}_{n|n-1},~\mathcal{S}_{n,0}=\emptyset$
		\For {$i=1:D$}
		\State Obtain $c_i$ as in \eqref{cis}
		\If {$c_i=0,$}
		\State $\mathcal{S}_{n,i}=\mathcal{S}_{n,i-1}\cup{\{i\}}$
		\State Update $\hat{\boldsymbol{\theta}}_{n|n-1,i-1}$ as in \eqref{update}
		\EndIf 
		\EndFor
		\vspace{-0.3cm}
		\begin{align*}  \mathrm{Return:}\{\check{\mathbf{y}}_n,\check{\mathbf{X}}_n,\check{\mathbf{R}}_n\}=\{\mathbf{y}_{n,{\mathcal{S}_{n,D}}},\mathbf{X}_{n,{\mathcal{S}_{n,D}}},\mathbf{R}_{n,{\mathcal{S}_{n,D}}}\}
		\end{align*}
	\end{algorithmic}
\vspace{-0.3cm}
\end{algorithm}

  \begin{table}[t!]\label{tab:table1}
  	\vspace{0.3cm}
  	\caption{{  Per time-slot sketching module complexity for $d<D$ data.}}
  	\label{tab:table_1}
  	\begin{center}
  		\begin{tabular} {|c|c|}
  			\hline
  			Sketching method & Complexity\\ \hline
  			Random sampling & $\mathcal{O}(D)$\\ \hline
  			Random projections (RP-KF) & $\mathcal{O}(Dp\log d )$\\ \hline
  			Adaptive censoring (AC-KF)& $\mathcal{O}(Dp)$\\ \hline
  		\end{tabular}
  	\end{center}
  \end{table}

\section{Update-Selection KF}
\label{sec:rc_KF}
In the last two sections, dimensionality reduction schemes were proposed for KF, by reducing the number of observations processed. {The resulting algorithms can be used to reduce the complexity of filtering as well as other data-related costs such as storage and transmission by reducing the dimensionality of $\mathcal{I}_n$. Specifically, if the observations need to be transmitted from a remote location, one could be interested in reducing the communication overhead as well computations; the RP-KF (where no feedback is required between the filter and the sensors)  or the AC-KF (that requires $\hat{\boldsymbol{\theta}}_{n|n-1}$ as feedback) would be preferable in such cases, with their dimensionality-reduction modules reducing the amount of data that need to be transmitted. On the other hand, if $D$ observations become available to the computing platform (block-by block or entry-by-entry) at each time-slot, reduction of computational complexity is the main concern, as well as the focus of the present section.  }  

Suppose that the observation noise has diagonal covariance matrix with $[\mathbf{R}_n]_{ii}=\sigma^2_i~\forall n$. Then, the KF correction step at time $n$ can be obtained by solving
\begin{align}\nonumber
\hat{\boldsymbol{\theta}}_{n|n}=\arg\min_{\boldsymbol{\theta}}&~\|\boldsymbol{\theta}-\hat{\boldsymbol{\theta}}_{n|n-1}\|_{\mathbf{P}_{n|n-1}^{-1}}^2\\\label{costf_seq}
&+\sum_{i=1}^D\frac{1}{\sigma^2_i}\left(y_{n,i}-\mathbf{x}_{n,i}^T\boldsymbol{\theta}\right)^2.
\end{align}
 A sequential (across entries of $\mathbf{y}_n$) solution of \eqref{costf_seq} can also be obtained using the recursive least-squares (RLS) algorithm with parameter estimate and error covariance matrix initialized at $\hat{\boldsymbol{\theta}}_{n|n-1}$ and $\mathbf{P}_{n|n-1}$, respectively. Specifically, let $\hat{\boldsymbol{\theta}}_{n|n,i}:=\mathbb{E}[\boldsymbol{\theta_n}|\hat{\boldsymbol{\theta}}_{n|n-1},\mathbf{y}_{n,{1:i}}]$ and $\mathbf{P}_{n|n,i}:=\mathrm{cov}(\hat{\boldsymbol{\theta}}_{n|n,i})$, for $i \in \{0,\ldots,D\}$; clearly, $\hat{\boldsymbol{\theta}}_{n|n,0}=\hat{\boldsymbol{\theta}}_{n|n-1}$ and $\hat{\boldsymbol{\theta}}_{n|n,D}=\hat{\boldsymbol{\theta}}_{n|n}$. The following RLS-like iteration, corresponding to the $i-$th entry of $\mathbf{y}_n$, updates the state estimate as
 \begin{align}\label{par_update}
\hat{\boldsymbol{\theta}}_{n|n,i}=\hat{\boldsymbol{\theta}}_{n|n,i-1}+ \mathbf{k}_{n,i}e_{n,i}
 \end{align} 
 where $e_{n,i}:=y_{n,i}-\mathbf{x}_{n,i}^T\hat{\boldsymbol{\theta}}_{n|n,i-1}$ and
 \begin{align}\label{gain}
 \mathbf{k}_{n,i}=\mathbf{P}_{n|n,i-1}\mathbf{x}_{n,i}{s}_{n,i}^{-1}
 \end{align} 
 with
  \begin{align}\label{innov}
  ~~~~~~~{s}_{n,i}:=\mathbf{x}_{n,i}^T\mathbf{P}_{n|n,i-1}\mathbf{x}_{n,i} + \sigma_i^2.
  \end{align} 
  The state covariance matrix is then updated as
 \begin{align}\label{covar_update}
 \mathbf{P}_{n|n,i}=\mathbf{P}_{n|n,i-1}-\mathbf{P}_{n|n,i-1}\mathbf{x}_{n,i}\mathbf{x}_{n,i}^T\mathbf{P}_{n|n,i-1}{s}_{n,i}^{-1}
 \end{align} 
  and the process is repeated until $i=D$, and all the measurements have been processed. 
  
  A common approach to dealing with $D\gg$ is to simply process as many data within time slot $n$ as the available computational resources allow for; see, e.g. \cite[Chapter 7]{bar2004estimation}. In the present work however, to reduce computational complexity, we propose \emph{judiciously skipping correction updates}. The criterion according to which the $i-$th row of $\{\mathbf{y}_n,\mathbf{X}_n\}$ will be used to update $\hat{\boldsymbol{\theta}}_{n|n,i-1}$ is based on how much the distribution $p({\boldsymbol{\theta}}_{n}|\hat{\boldsymbol{\theta}}_{n|n-1},[\mathbf{y}_n]_{1:i})$ after the update will diverge from the posterior $p({\boldsymbol{\theta}}_{n}|\hat{\boldsymbol{\theta}}_{n|n-1},[\mathbf{y}_n]_{1:i-1})$ before the update. A commonly used measure of difference between probability density functions (pdfs) is the Kullback–Leibler (KL) divergence, also known as relative entropy (see, e.g.~\cite{kullback1951information}). The KL divergence between two pdfs $p(x)$ and $q(x)$ is defined as{
  \begin{align}\nonumber
 \mathcal{D}_{KL}(p||q):=\int p(x) \ln \frac{p(x)}{q(x)} d x =\mathbb{E}_{p}\left[\ln \frac{p(x)}{q(x)}\right]
  \end{align}}
  and is not symmetric with respect to its arguments. In fact, one can interpret $p(x)$ as being the ``true'' pdf of $x$ while $q(x)$ is an approximate one. Then, $\mathcal{D}_{KL}(p||q)$ is a measure of how far the approximation is from reality.
  
  Aiming at carrying out only ``useful'' updates, $\mathcal{D}_{KL}(p_{n,i}|| p_{n,i-1})$ can be used as an indicator of how informative the update that involves the $i-$th row of $\{\mathbf{y}_n,\mathbf{X}_n\}$ is, where
    \begin{align}\nonumber
    p_{n,i}({\boldsymbol{\theta}}_{n}):=p({\boldsymbol{\theta}}_{n}|\hat{\boldsymbol{\theta}}_{n|n-1},[\mathbf{y}_n]_{1:i})
    \end{align}
  and
      \begin{align}\nonumber
      p_{n,i-1}({\boldsymbol{\theta}}_{n}):=p({\boldsymbol{\theta}}_{n}|\hat{\boldsymbol{\theta}}_{n|n-1},[\mathbf{y}_n]_{1:i-1}).
      \end{align}
  
   \begin{proposition}\label{pro:dkl}
   	Let observations be generated according to \eqref{model2a}-\eqref{model2b} with ${\mathbf{w}_n}$ and ${\mathbf{v}_n}$ Gaussian. Then, for the sequential estimator in \eqref{par_update}-\eqref{covar_update} it holds that 
   	    \begin{align*}\nonumber
         \mathcal{D}_{KL}(p_{n,i}|| p_{n,i-1})=&\frac{1}{2}\left(\bar{e}_{n,i}^2-1\right)\frac{\gamma_{n,i}}{\gamma_{n,i}+\sigma_i^2}\\
         & + \ln\sqrt{\frac{\gamma_{n,i}+\sigma_i^2}{\sigma_i^2}}
   	    \end{align*}
   	    where $\gamma_{n,i}:=\mathbf{x}_{n,i}^T\mathbf{P}_{n|n,i-1}\mathbf{x}_{n,i}$, and $\bar{e}_{n,i}:=e_{n,i}s_{n,i}^{-1/2}$ is the per-entry normalized innovation.
  \end{proposition}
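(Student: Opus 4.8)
The plan is to observe that each RLS step~\eqref{par_update}--\eqref{covar_update} is exactly a scalar measurement update: under the Gaussian assumptions on $\mathbf{w}_n$ and $\mathbf{v}_n$ it conditions the Gaussian density $p_{n,i-1}$ on the single observation $y_{n,i}$ and returns the Gaussian posterior $p_{n,i}$. Hence $p_{n,i}=\mathcal{N}(\hat{\boldsymbol{\theta}}_{n|n,i},\mathbf{P}_{n|n,i})$ and $p_{n,i-1}=\mathcal{N}(\hat{\boldsymbol{\theta}}_{n|n,i-1},\mathbf{P}_{n|n,i-1})$, and the derivation reduces to specializing the closed-form KL divergence between two $p$-dimensional Gaussians,
\begin{align*}
\mathcal{D}_{KL}(p_{n,i}\|p_{n,i-1})=\frac{1}{2}\Big[&\mathrm{tr}\big(\mathbf{P}_{n|n,i-1}^{-1}\mathbf{P}_{n|n,i}\big)-p+\boldsymbol{\delta}^T\mathbf{P}_{n|n,i-1}^{-1}\boldsymbol{\delta}\\
&+\ln\tfrac{\det\mathbf{P}_{n|n,i-1}}{\det\mathbf{P}_{n|n,i}}\Big],
\end{align*}
where $\boldsymbol{\delta}:=\hat{\boldsymbol{\theta}}_{n|n,i-1}-\hat{\boldsymbol{\theta}}_{n|n,i}=-\mathbf{k}_{n,i}e_{n,i}$ by~\eqref{par_update}. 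The remaining work is to collapse each of the four terms using the rank-one structure of the update.

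For the trace term I would substitute~\eqref{covar_update} and use $\mathbf{x}_{n,i}^T\mathbf{P}_{n|n,i-1}\mathbf{x}_{n,i}=\gamma_{n,i}$ to get $\mathrm{tr}(\mathbf{P}_{n|n,i-1}^{-1}\mathbf{P}_{n|n,i})=p-\gamma_{n,i}/s_{n,i}$, so this contribution cancels the $-p$ dimension term up to the residual $-\gamma_{n,i}/s_{n,i}$. For the Mahalanobis term, inserting $\boldsymbol{\delta}=-\mathbf{P}_{n|n,i-1}\mathbf{x}_{n,i}e_{n,i}/s_{n,i}$ from~\eqref{gain} yields $e_{n,i}^2\gamma_{n,i}/s_{n,i}^2=\bar{e}_{n,i}^2\gamma_{n,i}/s_{n,i}$ after using $\bar{e}_{n,i}^2=e_{n,i}^2/s_{n,i}$. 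The one genuinely non-routine step is the log-determinant: factoring~\eqref{covar_update} as $\mathbf{P}_{n|n,i}=\mathbf{P}_{n|n,i-1}\big(\mathbf{I}_p-\mathbf{x}_{n,i}\mathbf{x}_{n,i}^T\mathbf{P}_{n|n,i-1}/s_{n,i}\big)$ and invoking the matrix determinant (Sylvester) lemma gives $\det\mathbf{P}_{n|n,i}=\det\mathbf{P}_{n|n,i-1}\,(1-\gamma_{n,i}/s_{n,i})=\det\mathbf{P}_{n|n,i-1}\,\sigma_i^2/s_{n,i}$, whence the log-ratio equals $\ln(s_{n,i}/\sigma_i^2)$.

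Finally I would assemble the pieces: the trace and Mahalanobis residuals combine into $(\bar{e}_{n,i}^2-1)\gamma_{n,i}/s_{n,i}$, the prefactor $1/2$ turns the log-determinant into $\ln\sqrt{s_{n,i}/\sigma_i^2}$, and substituting $s_{n,i}=\gamma_{n,i}+\sigma_i^2$ from~\eqref{innov} reproduces the claimed identity. The main obstacle I anticipate is the determinant computation, where the rank-one factorization must be handled carefully so that Sylvester's identity applies cleanly; every other step is a direct substitution of the gain~\eqref{gain} and covariance~\eqref{covar_update} recursions.
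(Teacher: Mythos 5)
Your proof is correct and takes essentially the same route as the paper's: both identify $p_{n,i}$ and $p_{n,i-1}$ as Gaussians, write the closed-form KL divergence between them, and collapse the trace, Mahalanobis, and log-determinant terms using the rank-one structure of \eqref{gain}--\eqref{covar_update}, arriving at the identical intermediate quantities $p-\gamma_{n,i}s_{n,i}^{-1}$, $\bar{e}_{n,i}^2\gamma_{n,i}s_{n,i}^{-1}$, and $\ln(s_{n,i}/\sigma_i^2)$. The only cosmetic difference is in the log-determinant step, where you apply Sylvester's identity to the factored covariance downdate $\mathbf{P}_{n|n,i}=\mathbf{P}_{n|n,i-1}\left(\mathbf{I}_p-\mathbf{x}_{n,i}\mathbf{x}_{n,i}^T\mathbf{P}_{n|n,i-1}s_{n,i}^{-1}\right)$, whereas the paper applies the matrix determinant lemma to the information-form update $\mathbf{P}_{n|n,i}^{-1}=\mathbf{P}_{n|n,i-1}^{-1}+\mathbf{x}_{n,i}\mathbf{x}_{n,i}^T\sigma_i^{-2}$; these are the same lemma applied to equivalent forms of the same rank-one update.
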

 
  \begin{proof}
  	See Appendix 3.
  \end{proof}
  Proposition \ref{pro:dkl} offers a simple expression of $\mathcal{D}_{KL}(p_{n,i}|| p_{n,i-1})$ that will come handy in performing informative updates. Consider first the quantities involved in $\mathcal{D}_{KL}(p_{n,i}|| p_{n,i-1})$, namely the normalized innovation (residual) $\bar{e}_{n,i}$, which is a random variable and $\gamma_{n,i}$ that is deterministic. Interestingly, $\gamma_{n,i}:=\|\mathbf{x}_{n,i}\|_{\mathbf{P}_{n|n,i}}^2$ is an \emph{ensemble} quantity capturing the expected power of the $i-$th observation across the main directions of state uncertainty, while $\bar{e}_{n,i}$ is a random data-dependent variable that measures how important the $i-$th update is for a specific realization of the problem. Depicted in Fig.~\ref{fig:dkl} is a simulated sequence of $\mathcal{D}_{KL}(p_{n,i}|| p_{n,i-1})$ as a function of index $i$ for an arbitrary time-slot $n$. Immediately noticeable is that the per-step divergence decreases with an approximate rate of $1/i$ following the rate of decrease of $\mathbf{P}_{n|n,i}$. 
  One may also observe that certain updates yield higher KL divergence compared to the baseline. Fast and computationally efficient estimation may be achieved by isolating and performing only such informative updates. 
  
   While $\mathcal{D}_{KL}(p_{n,i}|| p_{n,i-1})$ offers a measure of difference between the posteriors prior and after updating, it lacks symmetry and it is not conveniently interpreted as distance. Consequently, we considered the modified metric 
    \begin{align}\label{sym_dkl}
    \mathcal{D}(p_{n,i}, p_{n,i-1}):=\mathcal{D}_{KL}(p_{n,i}|| p_{n,i-1})+\mathcal{D}_{KL}(p_{n,i}|| p_{n,i-1})
    \end{align} 
  also known as the symmetric KL divergence. As seen in Figs.~\ref{fig:dkl} and \ref{fig:sym_dkl}, both metrics follow a similar trend and converge to $0$ as the state estimate converges in probability. Nevertheless, $\mathcal{D}(p_{n,i}, p_{n,i-1})$ enjoys symmetry as well as a more simple expression which as given in Proposition \ref{pro:sym_dkl}. Subsequently, the following rule is proposed for selecting informative updates
  \begin{align}\label{update_rule}
  \mathcal{D}(p_{n,i}|| p_{n,i-1})\bigg\{
  \begin{array}{lc}
  \geq \tau_{n,i},& \mathrm{Update}~~\hat{\boldsymbol{\theta}}_{n|n,i},\mathbf{P}_{n|n,i} \\
  < \tau_{n,i},& \mathrm{Skip ~updates}.
  \end{array}
  \end{align} 
  
  \begin{proposition}\label{pro:sym_dkl}
  	Let observations be generated according to \eqref{model2a}-\eqref{model2b} with ${\mathbf{w}_n}$ and ${\mathbf{v}_n}$ Gaussian. Then, for the sequential estimator in \eqref{par_update}-\eqref{covar_update} it holds that 
  	\begin{align}\label{exact_sym_dkl}
  	\mathcal{D}(p_{n,i}, p_{n,i-1})=\frac{1}{2}\bar{e}_{n,i}^2\left(2\gamma_{n,i}+\left(\frac{\gamma_{n,i}}{\sigma_i}\right)^2\right)s_{n,i}^{-1}.
  	\end{align}
  \end{proposition}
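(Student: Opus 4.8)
The plan is to exploit the fact that both $p_{n,i}$ and $p_{n,i-1}$ are Gaussian. Under the Gaussian hypothesis on $\mathbf{w}_n$ and $\mathbf{v}_n$, the sequential estimator in \eqref{par_update}--\eqref{covar_update} is exactly the RLS/Kalman recursion, so $p_{n,i}=\mathcal{N}(\hat{\boldsymbol{\theta}}_{n|n,i},\mathbf{P}_{n|n,i})$ and $p_{n,i-1}=\mathcal{N}(\hat{\boldsymbol{\theta}}_{n|n,i-1},\mathbf{P}_{n|n,i-1})$. Since $\mathcal{D}(p_{n,i},p_{n,i-1})$ is the symmetric (Jeffreys) divergence, i.e. the sum of the two directed divergences, and Proposition~\ref{pro:dkl} already supplies $\mathcal{D}_{KL}(p_{n,i}\|p_{n,i-1})$, it suffices to compute the reverse divergence $\mathcal{D}_{KL}(p_{n,i-1}\|p_{n,i})$ by the same closed-form Gaussian-KL expression and add the two.

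For the directed KL between two $p$-variate Gaussians I would invoke the standard formula, whose three ingredients are a trace term $\mathrm{tr}(\mathbf{P}_{n|n,i}^{-1}\mathbf{P}_{n|n,i-1})$, a log-determinant term $\ln(\det\mathbf{P}_{n|n,i}/\det\mathbf{P}_{n|n,i-1})$, and a Mahalanobis term in the mean increment $\boldsymbol{\delta}:=\hat{\boldsymbol{\theta}}_{n|n,i}-\hat{\boldsymbol{\theta}}_{n|n,i-1}$. Two algebraic identities make all of these collapse to scalars. First, applying the matrix-inversion (Sherman--Morrison) lemma to the rank-one covariance update \eqref{covar_update} gives the information-form relation $\mathbf{P}_{n|n,i}^{-1}=\mathbf{P}_{n|n,i-1}^{-1}+\sigma_i^{-2}\mathbf{x}_{n,i}\mathbf{x}_{n,i}^T$, together with the determinant ratio $\det\mathbf{P}_{n|n,i-1}/\det\mathbf{P}_{n|n,i}=s_{n,i}/\sigma_i^2$ from the matrix-determinant lemma. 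Second, from \eqref{par_update}--\eqref{gain} the mean increment is $\boldsymbol{\delta}=\mathbf{k}_{n,i}e_{n,i}=\mathbf{P}_{n|n,i-1}\mathbf{x}_{n,i}e_{n,i}s_{n,i}^{-1}$. Substituting these and using $\gamma_{n,i}=\mathbf{x}_{n,i}^T\mathbf{P}_{n|n,i-1}\mathbf{x}_{n,i}$ and $s_{n,i}=\gamma_{n,i}+\sigma_i^2$, the trace, log-det and quadratic pieces all reduce to elementary expressions in $\gamma_{n,i}$, $\sigma_i^2$ and $e_{n,i}$.

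Finally I would add the forward expression of Proposition~\ref{pro:dkl} to the reverse one. The crucial simplification is that the two log-determinant contributions are exact negatives and cancel in the symmetric divergence, removing the transcendental $\ln\sqrt{\cdot}$ term; what remains are the trace and Mahalanobis contributions, which I would collect and rewrite using $\bar{e}_{n,i}^2=e_{n,i}^2 s_{n,i}^{-1}$ to obtain the compact form \eqref{exact_sym_dkl}. I expect the main obstacle to be the bookkeeping in this last step: correctly evaluating the reverse-direction quadratic form $\boldsymbol{\delta}^T\mathbf{P}_{n|n,i}^{-1}\boldsymbol{\delta}$, where the rank-one term $\sigma_i^{-2}\mathbf{x}_{n,i}\mathbf{x}_{n,i}^T$ contributes the $(\gamma_{n,i}/\sigma_i)^2$ piece, and then grouping the innovation-dependent terms so that everything consolidates into the single prefactor $\tfrac12\bar{e}_{n,i}^2\big(2\gamma_{n,i}+(\gamma_{n,i}/\sigma_i)^2\big)s_{n,i}^{-1}$.
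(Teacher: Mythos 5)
Your plan follows the same route as the paper's own proof: write the symmetric divergence as the sum of the two directed Gaussian KLs, reuse the trace/log-determinant/Mahalanobis decomposition together with the rank-one identities $\mathbf{P}_{n|n,i}^{-1}=\mathbf{P}_{n|n,i-1}^{-1}+\sigma_i^{-2}\mathbf{x}_{n,i}\mathbf{x}_{n,i}^T$ and $\hat{\boldsymbol{\theta}}_{n|n,i}-\hat{\boldsymbol{\theta}}_{n|n,i-1}=\mathbf{k}_{n,i}e_{n,i}$, and cancel the log-determinants. Your evaluation of the quadratic forms is also exactly what the paper does: the forward one gives $e_{n,i}^2\gamma_{n,i}s_{n,i}^{-2}$, the reverse one adds the $(\gamma_{n,i}/\sigma_i)^2$ piece through the rank-one term in $\mathbf{P}_{n|n,i}^{-1}$, and together they yield precisely the right-hand side of \eqref{exact_sym_dkl}.

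The step that fails is the claim that the remaining trace contributions ``consolidate'' into that prefactor. They cannot: they are deterministic (innovation-independent), so no regrouping can absorb them into a term proportional to $\bar{e}_{n,i}^2$. Concretely, using \eqref{tr} and \eqref{inv_cov},
\begin{align*}
&\tfrac{1}{2}\left[\mathrm{tr}\left(\mathbf{P}_{n|n,i-1}^{-1}\mathbf{P}_{n|n,i}\right)-p\right]+\tfrac{1}{2}\left[\mathrm{tr}\left(\mathbf{P}_{n|n,i}^{-1}\mathbf{P}_{n|n,i-1}\right)-p\right]\\
&\qquad=\tfrac{1}{2}\left(\frac{\gamma_{n,i}}{\sigma_i^2}-\frac{\gamma_{n,i}}{s_{n,i}}\right)=\frac{\gamma_{n,i}^2}{2\sigma_i^2 s_{n,i}}
\end{align*}
which is strictly positive whenever $\gamma_{n,i}>0$; for Gaussians with unequal covariances the symmetric KL always retains such a covariance-mismatch term. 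Executing your plan faithfully therefore produces \eqref{exact_sym_dkl} \emph{plus} this residual. You should be aware that the paper's own proof makes the same silent omission: it asserts that the symmetric divergence equals the mean-difference form $\tfrac{1}{2}(\hat{\boldsymbol{\theta}}_{n|n,i-1}-\hat{\boldsymbol{\theta}}_{n|n,i})^T(\mathbf{P}_{n|n,i-1}^{-1}+\mathbf{P}_{n|n,i}^{-1})(\hat{\boldsymbol{\theta}}_{n|n,i-1}-\hat{\boldsymbol{\theta}}_{n|n,i})$ alone, which is true only when the two covariances coincide. So your derivation is no less rigorous than the paper's, but the consolidation step you anticipate is exactly where the identity \eqref{exact_sym_dkl}, as stated, breaks down; a correct write-up should either report the extra term or justify why it is negligible (e.g., it is dominated by the $\bar{e}_{n,i}^2$ term when the update is deemed informative).
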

    \begin{proof}
    	See Appendix 4.
    \end{proof}
  Using \eqref{exact_sym_dkl}, rule \eqref{update_rule} can be readily implemented. Regarding the sequence of thresholds $\{\tau_{n,i}\}_{i=1}^D$, a judicious choice is  
  \begin{align}\label{thresholds}
  \tau_{n,i}=\tau_n \frac{1}{i}
  \end{align}
  which promotes updates with large informational value relative to the stage of the estimation process. The total number of updates per slot $n$ can be tuned by $\tau_n$.	A couple of remarks are now in order. 
  	 
  \begin{figure}[t!]
  	\centering
  	\subfigure[]{
  		\centerline { \includegraphics[width=0.5\textwidth]{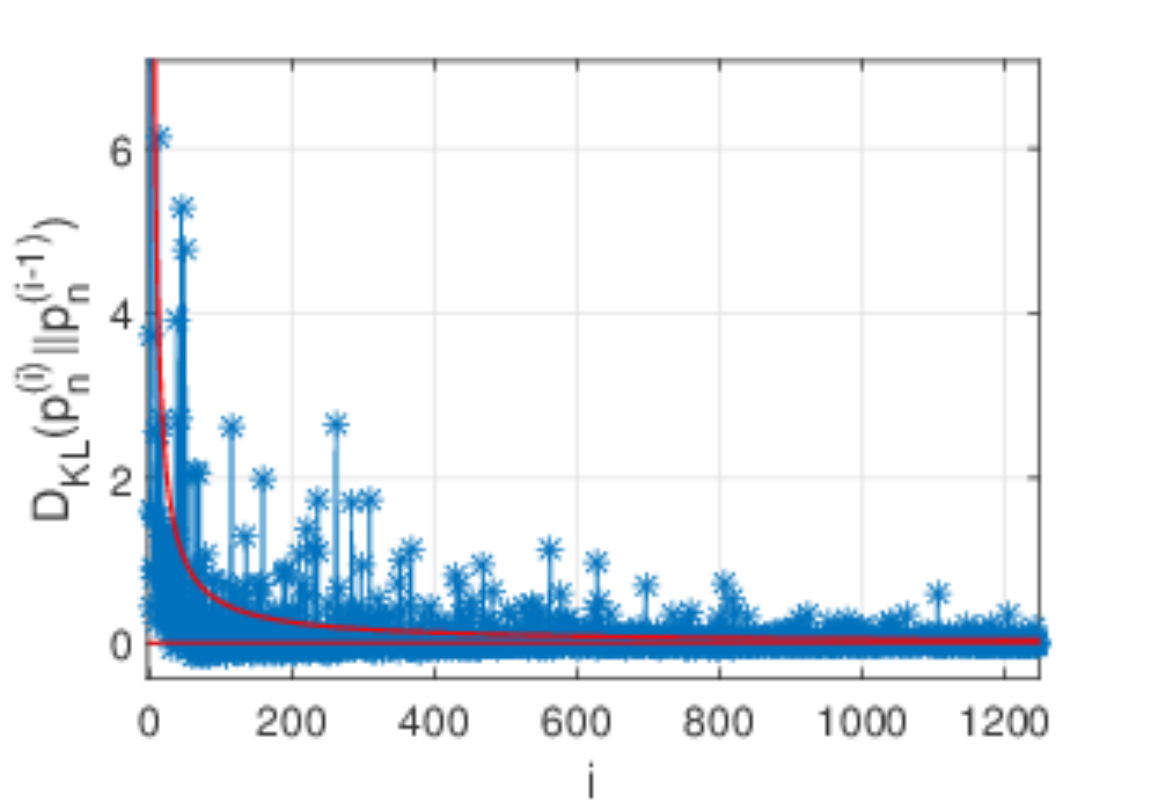}}	
  		\label{fig:dkl}
  	}
  	\subfigure[]{
  	   \centerline	{\includegraphics[width=0.5\textwidth]{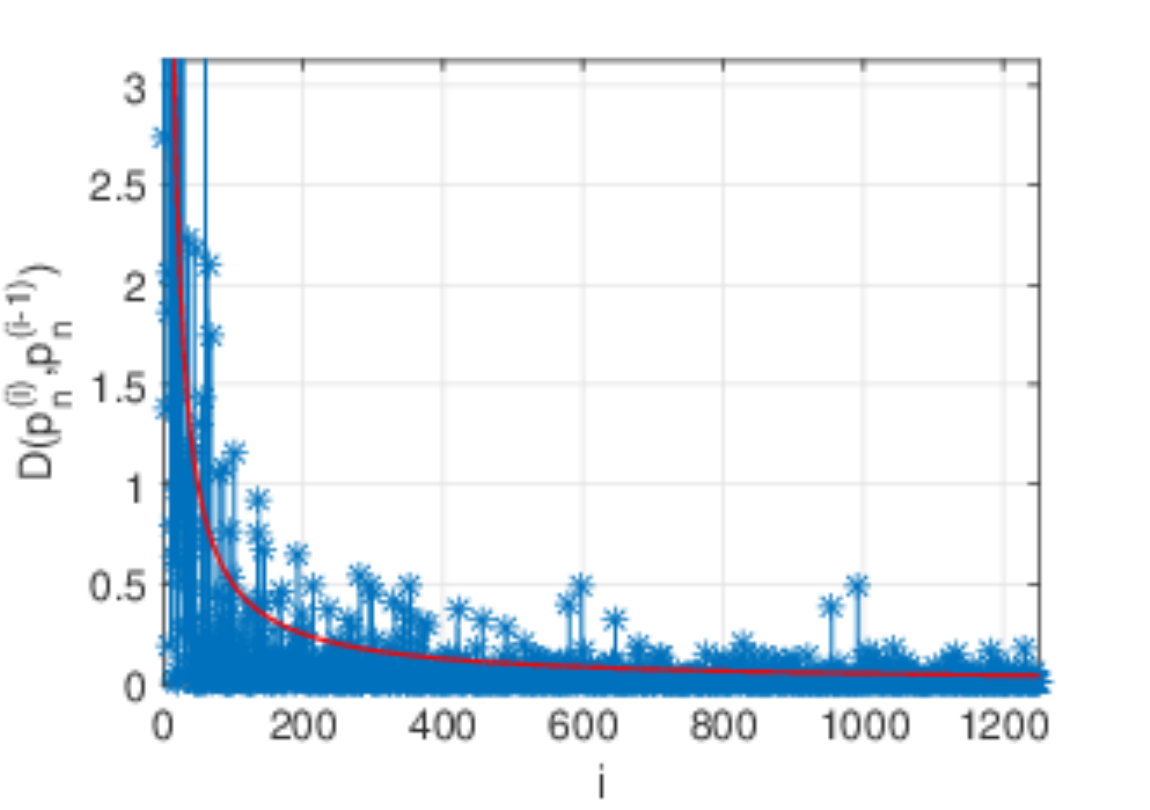}}
  		\label{fig:sym_dkl}
  	}
  	\caption{Example of a) KL divergence and b) symmetric KL divergence  evolution across sequential correction updates. Both metrics converge to $0$ following a $\propto 1/i$ trend.}
  \end{figure}

  \textbf{Remark 2:} KL divergence induced by a measurement was also employed by \cite{zheng2014sequential} to offer an alternative viewpoint on a \emph{distributed} censoring rule for reducing the \emph{communication load} in WSNs. Specifically, it was shown that the KL divergence of $p({\boldsymbol{\theta}_n})$ with $p({\boldsymbol{\theta}_n}|y_{n,i}))$ as reference $\big($i.e. $\mathcal{D}_{KL}(p({\boldsymbol{\theta}_n})|| p({\boldsymbol{\theta}_n}|y_{n,i}))\big)$ is proportional to the magnitude of $e_{n,i}$, which implies that the latter is related to the informational value of a measurement. Apart from the different goals and context, a major difference of the present section's contribution relative to~\cite{zheng2014sequential} is the explicit calculation, and use of the (symmetric) KL divergence in the proposed update selection rule.
  
   \textbf{Remark 3:} Interestingly, our proposed \emph{data-driven} update selection using $\mathcal{D}_{KL}(p_{n,i}|| p_{n,i-1})$ is also related to OED-type sensor selection schemes that are based on the mutual information between a sensor and the model (e.g.,~\cite{ertin2003maximum}). This relation can be observed upon recalling that the mutual information $I(X;Y)$ between two random variables $X$ and $Y$ can be expressed as
   \begin{equation}\label{mutual}
   I(X;Y)=\mathbb{E}_Y[\mathcal{D}_{KL}(p(X|Y)|| p(X))].
   \end{equation}
   In the present context,~\eqref{mutual} implies that the mutual information between the $i-$th ``sensor'' at time slot $n$ and $\boldsymbol{\theta}_n$ in a sequential processing setting, equals its $\mathcal{D}_{KL}(p_{n,i}|| p_{n,i-1})$ averaged over all possible measurements $y_{n,i}$.

  \subsection{Reduced-complexity censoring rule}
  
  In the previous section, an update selection rule was introduced in \eqref{update_rule} relying on the information metric in \eqref{exact_sym_dkl}. Practical implementation of \eqref{update_rule} requires careful consideration of the computational complexity needed to obtain $\mathcal{D}(p_{n,i}, p_{n,i-1})$. As seen in \eqref{exact_sym_dkl}, to obtain the latter it suffices to compute $e_{n,i}$ and $\gamma_{n,i}$ (since $s_{n,i}=\gamma_{n,i}+\sigma_i^2$). While computing $e_{n,i}$ requires only $\mathcal{O}(p)$ products, obtaining $\gamma_{n,i}:=\mathbf{x}_{n,i}^T\mathbf{P}_{n|n,i-1}\mathbf{x}_{n,i}$ requires a matrix-vector product that comes with $\mathcal{O}(p^2)$ complexity. Thus, even though checking whether an update is informative or not has smaller complexity than the update itself (cf. \eqref{par_update}-\eqref{covar_update}), both tasks are of the same order of $\mathcal{O}(p^2)$ complexity. Ideally, checking the update should be less costly than performing the update by an order of magnitude.
  
  For this purpose, a low-complexity approximation of $\gamma_{n,i}$ is highly desirable. One way to approximate $\gamma_{n,i}$ is to use the eigen-decomposition $\mathbf{P}_{n|n,i-1}=\mathbf{V}_{n,i-1}\mathbf{\Lambda}_{n,i-1}\mathbf{V}_{n,i-1}^T$ to produce the best $k$-rank approximation of $\mathbf{P}_{n|n,i-1}$ as
    \begin{align}\nonumber
    \hat{\mathbf{P}}_{n|n,i-1}^{k}=\mathbf{V}_{n,i-1}^{k}\mathbf{\Lambda}_{n,i-1}^{k}(\mathbf{V}_{n,i-1}^{k})^T
    \end{align}
   where $\mathbf{\Lambda}_{n,i-1}^{k}$ is a $k\times k$ diagonal matrix containing the $k\leq p$ largest eigenvalues of $\mathbf{P}_{n|n,i-1}$, and $\mathbf{V}_{n,i-1}^{k}$ is the $p\times k$ matrix of the corresponding eigenvectors. Using $\hat{\mathbf{P}}_{n|n,i-1}^{k}$ to approximate $\gamma_{n,i}$, yields
       \begin{align}\nonumber
       g_{n,i}^{k}&:=\mathbf{x}_{n,i}^T\hat{\mathbf{P}}_{n|n,i-1}^{k}\mathbf{x}_{n,i}\\\label{naive}
       &=\mathbf{x}_{n,i}^T\mathbf{V}_{n,i-1}^{k}\mathbf{\Lambda}_{n,i-1}^{k}(\mathbf{x}_{n,i}^T\mathbf{V}_{n,i-1}^{k})^T
       \end{align}
    which can be obtained with $\mathcal{O}(pk)$ complexity. Although $g_{n,i}^{k}$ exactly captures the power of $\mathbf{x}_{n,i}^T$ along the principal directions of $\mathbf{P}_{n|n,i-1}$, it is in general a poor estimate of $\gamma_{n,i}$. In fact, ignoring the power of the $p-k$ smallest eigenvalues of $\mathbf{P}_{n|n,i-1}$ leads to under-estimation; that is $g_{n,i}^{k}\leq{\gamma}_{n,i}^{k}$. To mitigate this problem, let us denote the power of $\mathbf{x}_{n,i}$ that is not captured by the first $k$ principal directions as 
           \begin{align}\nonumber
           g_{n,i}^{-k}:=\|\mathbf{x}_{n,i}\|_2^2-\|\mathbf{x}_{n,i}^T\mathbf{V}_{n,i-1}^{k}\|_2^2.
           \end{align}
  Then, if the the remaining power $g_{n,i}^{-k}$ is distributed evenly along the $p-k$ directions that correspond to the smallest eigenvalues of $\mathbf{P}_{n|n,i-1}$, an improved approximation of ${\gamma}_{n,i}^{k}$ is 
             \begin{align}\nonumber
             \hat{\gamma}_{n,i}^{k}&=g_{n,i}^{k}+\frac{1}{p-k}g_{n,i}^{-k}\sum_{j=k+1}^p[\mathbf{\Lambda}_{n,i-1}]_{jj}\\\label{approx_g}
             &=g_{n,i}^{k}+\frac{1}{p-k}g_{n,i}^{-k}\left[\mathrm{tr}(\mathbf{P}_{n|n,i-1})-\mathrm{tr}(\mathbf{\Lambda}_{n,i-1}^{k})\right]
             \end{align}
  where for the second equality we used that $\mathrm{tr}(\mathbf{P}_{n|n,i-1})=\mathrm{tr}(\mathbf{\Lambda}_{n|n,i-1})$. Essentially, by ignoring the angle of $\mathbf{x}_{n,i}^T$ along the $p-k$ least important directions of uncertainty, an estimate $\hat{\gamma}_{n,i}^{k}\approx{\gamma}_{n,i}$ can be found with $\mathcal{O}(pk)$ complexity. Simulations will demonstrate that for most cases $k$ need not be very large for the purpose of obtaining a reliable approximation of ${\gamma}_{n,}^{(i)}$, and thus of the update selection rule in \eqref{update_rule}. 
  
  Finally, let us consider the computational burden of re-computing the eigen-decomposition of $\mathbf{P}_{n|n,i-1}$ when an update is performed. Fortunately, the decomposition needs only be fully computed once for $\mathbf{P}_{n|n,0}$, after the prediction step. Then, exploiting that $\mathbf{P}_{n|n,i-1}$ is given as a sequence of rank-one updates of symmetric positive matrices (cf. \eqref{covar_update}) allows for low-complexity $\mathcal{O}(p^2)$ updates of the eigen-decompositions, see e.g.~\cite{yu1991recursive} and~\cite{gu1994stable}, while the fact that only the first $k$ eigen-pairs are required can further reduce the complexity of the updates. The need for tracking the principal eigen-pairs of  $\mathbf{P}_{n|n,i-1}$ can be completely eliminated by setting $k=0$, which yields the estimate
               \begin{align}\label{bit_naive}
                 \hat{\gamma}_{n,i}^{0}=\frac{1}{p}\|\mathbf{x}_{n,i}\|_2^2\mathrm{tr}(\mathbf{P}_{n|n,i-1})
                 \end{align}
  with $\mathcal{O}(p)$ complexity, at the cost of ignoring information given by the angle of $\mathbf{x}_{n,i}$. For cases where the eigenvalues of $\mathbf{P}_{n|n,i-1}$ are approximately uniform, \eqref{bit_naive} provides a practical and sufficiently accurate estimate of ${\gamma}_{n,i}$. Generally, obtaining \eqref{approx_g} requires $\mathcal{O}(p(k+1))$ computations. Overall, the complexity of the correction step using the iterative method in \eqref{par_update}-\eqref{covar_update} with the update selection rule in \eqref{update_rule} and the approximation in \eqref{approx_g} is $\mathcal{O}(dp^2)+\mathcal{O}(Dp(k+1))$, where $d\ll D$ is the number of updates. Depending on the size of $p,d$ and $D$, the overall complexity of the proposed scheme can be considerably less than the standard $\mathcal{O}(Dp^2)$.
  
  \subsection{First-order updates}
  
  The fact that the update selection rule in \eqref{update_rule} requires at least $\mathcal{O}(p)$ computations hints at possible modifications of the present scheme, that are considered in this section. Specifically, instead of using \eqref{update_rule} to completely skip updates, one may incorporate a simple $\mathcal{O}(p)$ update without noticeably increasing the overall complexity of the algorithm. For instance, having computed $e_{n,i}$ which is required in \eqref{update_rule}, the following LMS-like parameter update can readily be implemented 
  \begin{equation}\label{1sto_update} 
  \hat{\boldsymbol{\theta}}_{n|n,i}=\hat{\boldsymbol{\theta}}_{n|n,i-1}
  +\mu_{n,i} \mathbf{x}_{n,i} e_{n,i}
  \end{equation}
  where $\mu_{n,i}$ denotes a user selected stepsize. Given the update in \eqref{1sto_update}, consider the difference
   \begin{equation}\label{delta_mse} 
   \Delta_n(\mu_{n,i}):=\mathrm{MSE}(\hat{\boldsymbol{\theta}}_{n|n,i})-\mathrm{MSE}(\hat{\boldsymbol{\theta}}_{n|n,i-1})
   \end{equation}
  where $\mathrm{MSE}(\boldsymbol{\theta}):=\mathbb{E}[\|\boldsymbol{\theta}-\boldsymbol{\theta}_n\|_2^2]$. Clearly, $\mu_{n,i}$ should be chosen such that $\Delta_n(\mu_{n,i})\leq{0}$, or, ideally such that $\Delta_n(\mu_{n,i})$ is minimized. But first, it is useful to derive an explicit expression for $\Delta_n(\mu_{n,i})$.   

\begin{proposition}\label{pro:delta_mse}
    For an update of the form \eqref{1sto_update} that follows an update of \eqref{par_update}-\eqref{covar_update} it holds that 
	\begin{align}\label{exact_delta_mse}
	\Delta_n(\mu_{n,i})=  \|\mathbf{x}_{n,i}\|_2^2 (\gamma_{n,i} +\sigma_i^2)\mu_{n,i}^2 -2\gamma_{n,i}\mu_{n,i}.
	\end{align}
\end{proposition}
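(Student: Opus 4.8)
The plan is to track the estimation error directly and expand its second moment. Define the error $\tilde{\boldsymbol{\theta}}_{i}:=\hat{\boldsymbol{\theta}}_{n|n,i}-\boldsymbol{\theta}_n$ (and likewise $\tilde{\boldsymbol{\theta}}_{i-1}$), so that $\mathrm{MSE}(\hat{\boldsymbol{\theta}}_{n|n,i})=\mathbb{E}[\|\tilde{\boldsymbol{\theta}}_{i}\|_2^2]$. Subtracting $\boldsymbol{\theta}_n$ from both sides of the first-order recursion \eqref{1sto_update} gives $\tilde{\boldsymbol{\theta}}_{i}=\tilde{\boldsymbol{\theta}}_{i-1}+\mu_{n,i}\mathbf{x}_{n,i}e_{n,i}$. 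First I would rewrite the innovation purely in terms of the error and the fresh noise: substituting $y_{n,i}=\mathbf{x}_{n,i}^T\boldsymbol{\theta}_n+v_{n,i}$ into $e_{n,i}=y_{n,i}-\mathbf{x}_{n,i}^T\hat{\boldsymbol{\theta}}_{n|n,i-1}$ yields $e_{n,i}=v_{n,i}-\mathbf{x}_{n,i}^T\tilde{\boldsymbol{\theta}}_{i-1}$.

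Next I would expand the squared norm and take expectations, treating $\mathbf{x}_{n,i}$ and $\mu_{n,i}$ as deterministic, to reach
\begin{align*}
\Delta_n(\mu_{n,i})=2\mu_{n,i}\,\mathbb{E}\!\left[e_{n,i}\,\mathbf{x}_{n,i}^T\tilde{\boldsymbol{\theta}}_{i-1}\right]+\mu_{n,i}^2\|\mathbf{x}_{n,i}\|_2^2\,\mathbb{E}[e_{n,i}^2].
\end{align*}
It then remains to evaluate the two expectations. I would invoke two structural facts guaranteed by the hypothesis that stage $i-1$ is the output of the exact recursion \eqref{par_update}--\eqref{covar_update}: that the sequential estimate is unbiased with error covariance exactly the computed $\mathbf{P}_{n|n,i-1}$, i.e. $\mathbb{E}[\tilde{\boldsymbol{\theta}}_{i-1}\tilde{\boldsymbol{\theta}}_{i-1}^T]=\mathbf{P}_{n|n,i-1}$; and that the fresh noise $v_{n,i}$ is zero-mean with variance $\sigma_i^2$ and independent of $\tilde{\boldsymbol{\theta}}_{i-1}$, which depends only on $\boldsymbol{\theta}_n$, the prior, and $v_{n,1:i-1}$. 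Using $e_{n,i}=v_{n,i}-\mathbf{x}_{n,i}^T\tilde{\boldsymbol{\theta}}_{i-1}$, every cross term containing $v_{n,i}$ vanishes, leaving $\mathbb{E}[e_{n,i}^2]=\sigma_i^2+\mathbf{x}_{n,i}^T\mathbf{P}_{n|n,i-1}\mathbf{x}_{n,i}=\gamma_{n,i}+\sigma_i^2=s_{n,i}$ and $\mathbb{E}[e_{n,i}\mathbf{x}_{n,i}^T\tilde{\boldsymbol{\theta}}_{i-1}]=-\mathbf{x}_{n,i}^T\mathbf{P}_{n|n,i-1}\mathbf{x}_{n,i}=-\gamma_{n,i}$, both invoking the definition $\gamma_{n,i}:=\mathbf{x}_{n,i}^T\mathbf{P}_{n|n,i-1}\mathbf{x}_{n,i}$ from Proposition \ref{pro:dkl}.

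Plugging these into the displayed expression gives $\Delta_n(\mu_{n,i})=-2\gamma_{n,i}\mu_{n,i}+\|\mathbf{x}_{n,i}\|_2^2(\gamma_{n,i}+\sigma_i^2)\mu_{n,i}^2$, which is exactly \eqref{exact_delta_mse}. The binomial expansion and collecting of terms are immediate; the only real obstacle is the first structural fact, namely that $\mathbf{P}_{n|n,i-1}$ truly coincides with the error covariance $\mathbb{E}[\tilde{\boldsymbol{\theta}}_{i-1}\tilde{\boldsymbol{\theta}}_{i-1}^T]$ rather than merely being the recursively propagated matrix. This is where Gaussianity of $\mathbf{w}_n,\mathbf{v}_n$ and the correct initialization at $\hat{\boldsymbol{\theta}}_{n|n-1},\mathbf{P}_{n|n-1}$ enter: they render the sequential estimator the conditional-mean MMSE estimator, for which the propagated covariance equals the actual error covariance and the orthogonality principle holds. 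I would therefore state this covariance identity explicitly, or cite the standard Kalman/RLS optimality from \cite{bar2004estimation}, before carrying out the moment computation above.
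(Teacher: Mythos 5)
Your proposal is correct and follows essentially the same route as the paper's proof: both subtract $\boldsymbol{\theta}_n$ to obtain the error recursion, kill the cross term via independence of the fresh noise $v_{n,i}$ from $\tilde{\boldsymbol{\theta}}_{i-1}$, and invoke unbiasedness together with the identity $\mathbb{E}[\tilde{\boldsymbol{\theta}}_{i-1}\tilde{\boldsymbol{\theta}}_{i-1}^T]=\mathbf{P}_{n|n,i-1}$ (which the paper uses implicitly, and you rightly flag as the one fact requiring the exact recursion \eqref{par_update}--\eqref{covar_update} and Gaussianity). The only difference is bookkeeping: you expand the scalar $\|\tilde{\boldsymbol{\theta}}_{i}\|_2^2$ directly, while the paper forms the outer product $\tilde{\boldsymbol{\theta}}_{n,i}\tilde{\boldsymbol{\theta}}_{n,i}^T$ and then takes the trace, using $\mathrm{tr}(\mathbf{x}_{n,i}\mathbf{x}_{n,i}^T\mathbf{P}_{n|n,i-1})=\gamma_{n,i}$ and $\mathrm{tr}((\mathbf{x}_{n,i}\mathbf{x}_{n,i}^T)^2\mathbf{P}_{n|n,i-1})=\|\mathbf{x}_{n,i}\|_2^2\gamma_{n,i}$ --- the two computations are identical in content.
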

\begin{proof}
	See Appendix 5.
\end{proof}
  To guarantee $\Delta_n(\mu_{n,i})\leq{0}$, it suffices to choose $\mu_{n,i}$ as
  	\begin{align}\label{suff}
0 \leq \mu_{n,i} \leq \frac{2\gamma_{n,i}}{\|\mathbf{x}_{n,i}\|_2^2(\gamma_{n,i} +\sigma_i^2)}
  	\end{align}
  while for
    	\begin{align}\label{opt}
    	\mu_{n,i}^{\ast} = \frac{\gamma_{n,i}}{\|\mathbf{x}_{n,i}\|_2^2(\gamma_{n,i} +\sigma_i^2)}
    	\end{align}
  the minimum of $\Delta_n(\cdot)$ is attained
  	\begin{align}\label{min}
  	\Delta_n(\mu_{n,i}^{\ast})= -\frac{(\gamma_{n,i})^2}{\|\mathbf{x}_{n,i}\|_2^2(\gamma_{n,i} +\sigma_i^2)}. 
  	\end{align}
 Although the LMS-like iteration \eqref{1sto_update} reduces the MSE by as much as $-\Delta_n(\mu_{n,i}^{\ast})$, updating $\mathbf{P}_{n|n,i-1}$ incurs complexity $\mathcal{O}(p^2)$, and it is thus skipped. Skipping covariance updates for first-order updates is also well motivated by the fact that $-\Delta_n(\mu_{n,i}^{\ast})$ is generally significantly smaller than the reduction achieved by the second-order updates \eqref{par_update}-\eqref{covar_update}. Nevertheless, one may in practice use  $\mu_{n,i}\leq\mu_{n,i}^{\ast}$, to compensate for the (slow) decrease in estimation variance. Finally, while the exact value of $\gamma_{n,i}$ is generally not available, $\mu_{n,i}$ can be selected after using the estimate $\hat{\gamma}_{n,i}$ from \eqref{approx_g} in \eqref{suff} or \eqref{opt}. The overall proposed reduced-complexity update-selection (US) KF described in Section \ref{sec:rc_KF} is tabulated as Algorithm \ref{rc_KF}.   
   	\begin{algorithm}[t!]
   		\caption{US-KF}\label{rc_KF}
   		\begin{algorithmic}
   			\State \underline{\textbf{Initialization:}} $\hat{\boldsymbol{\theta}}_{0|0}=\mathbf{m}_0,~\mathbf{P}_{0|0}=\mathbf{P}_0$
   			\For {$n=1:N$}
   			\State \underline{\textbf{Prediction Step:}}
   			\State \hspace{0.07 cm}$\hat{\boldsymbol{\theta}}_{n|n-1}=\mathbf{F}_n\hat{\boldsymbol{\theta}}_{n-1|n-1}+\mathbf{G}_n\mathbf{u}_n$
   			\State $\mathbf{P}_{n|n-1}=\mathbf{F}_n\mathbf{P}_{n-1|n-1}\mathbf{F}_n^T+\mathbf{Q}_{n}$
   			\State \underline{\textbf{Correction Step:}}
   			\State Set parameters: $k,\tau_n$
   			\State  Initialize: $\hat{\boldsymbol{\theta}}_{n|n,0}=\hat{\boldsymbol{\theta}}_{n|n-1}, \mathbf{P}_{n|n,0}=\mathbf{P}_{n|n-1}$
   			\State Compute $k$ first eigenpairs:  $\{\mathbf{V}_{n,0}^{k},\mathbf{\Lambda}_{n,0}^{k}\}$
   			\For {$i=1:D$}
   			\State Obtain $\hat{\gamma}_{n,i}$ as in \eqref{approx_g}
   			\State Obtain $\mathcal{D}(p_{n,i}, p_{n,i-1})$ as in \eqref{exact_sym_dkl}
   			\If {$\mathcal{D}(p_{n,i}, p_{n,i-1})\geq \tau_n/i$} 
   			\State Update $\{\hat{\boldsymbol{\theta}}_{n|n,i}, \mathbf{P}_{n|n,i}\}$ as in \eqref{par_update}-\eqref{covar_update}
   			\State Update $\{\mathbf{V}_{n,i}^{k},\mathbf{\Lambda}_{n,i}^{k}\}$ of rank-$1$ update of $\mathbf{P}_{n|n,i}$
   			\Else
   			\State Obtain $\mu_{n,i}$ by plugging $\hat{\gamma}_{n,i}$ in \eqref{suff} or \eqref{opt}
   			\State Update $\hat{\boldsymbol{\theta}}_{n|n,i}$ as in \eqref{1sto_update}
   			\EndIf
   			\EndFor
   			\EndFor
   		\end{algorithmic}
   	\end{algorithm}
%

\section{Budgeted Fixed-Interval Smoothing}

The methods introduced in Sections \ref{sec:rpkf},~\ref{sec:AC-KF} and \ref{sec:rc_KF} utilize dimensionality reduction, measurement selection, and update selection, in order to promote low-complexity correction updates of the KF. In the present section, we briefly explore another direction that allows for reliable tracking with smaller data usage and computational complexity. Specifically, consider the ``smoothed'' estimate $\hat{\boldsymbol{\theta}}^{\mathrm{KS}}_n:=\mathbb{E}[\boldsymbol{\theta}_n|\{\mathbf{y}_n\}_{n=1}^N]$, { and let $\hat{\boldsymbol{\theta}}^{\mathrm{KS}}$ be formed by concatenating all such smoothed estimates. This can also be written as
				\begin{align}\nonumber
					\hat{\boldsymbol{\theta}}^{\mathrm{KS}}&=\arg\min_   {\{\boldsymbol{\theta}_n\}_{n=1}^N}\frac{1}{2}\sum\limits_{n=1}^{N}\|\mathbf{y}_n-\mathbf{X}_n\boldsymbol{\theta}_n\|_{\mathbf{R}_n^{-1}}^2\\
					&+\|\boldsymbol{\theta}_n-\mathbf{F}_n\boldsymbol{\theta}_{n-1}\|_{\mathbf{Q}_n^{-1}}^2+\|\boldsymbol{\theta}_0-\mathbf{m}_0\|_{\mathbf{P}_n^{-1}}^2\label{KS}
				\end{align}}
				which is optimal in the linear minimum mean-square error (LMMSE) sense. 
				
				Aiming at a recursive solver of~\eqref{KS}, one can rely on the Rauch-Tung-Stribel (RTS) forward-backward KS algorithm~\cite{rts}. In its forward pass, the RTS algorithm is identical to the KF.
				The KF estimates $\{\hat{\boldsymbol{\theta}}_{n|n}\}_{n=1}^N$ are then stored and processed by the backward pass of the KS, while the error covariance matrices $\{\mathbf{P}_{n|n}\}_{n=1}^N$ are computed off-line.
				
				Given $\hat{\boldsymbol{\theta}}_{n+1|N}$, the backward iteration solves
				\begin{align}\label{costs}
					\hat{\boldsymbol{\theta}}_{n|N}:=\arg\min_{\boldsymbol{\theta}}\|\hat{\boldsymbol{\theta}}_{n+1|N}-\mathbf{F}_{n}\boldsymbol{\theta}\|_{\mathbf{Q}_n^{-1}}^2+\|\boldsymbol{\theta}-\hat{\boldsymbol{\theta}}_{n|n}\|_{\mathbf{P}_{n|n}^{-1}}^2.
				\end{align}
				Similar to filtering, the minimizer of~\eqref{costs} is also given in closed form as
				\begin{align*}
					\hat{\boldsymbol{\theta}}_{n|N}=\left(\mathbf{F}_n^T\mathbf{Q}_n^{-1}\mathbf{F}_n+\mathbf{P}_{n|n}^{-1}\right)^{-1} 
					\left(\mathbf{F}_n^T\mathbf{Q}_n^{-1}\hat{\boldsymbol{\theta}}_{n+1|N}+\mathbf{P}_{n|n}^{-1}\hat{\boldsymbol{\theta}}_{n|n}\right).
				\end{align*}
				After invoking the MIL and letting $\mathbf{B}_n:=\mathbf{P}_{n|n}\mathbf{F}_n^T\mathbf{P}_{n+1|n}^{-1}$, the estimate $\hat{\boldsymbol{\theta}}_{n|N}$ is given in the correction form of $\hat{\boldsymbol{\theta}}_{n|n}$ as  
					\begin{align}
						\hat{\boldsymbol{\theta}}_{n|N}=\hat{\boldsymbol{\theta}}_{n|n}+\mathbf{B}_n\left(\hat{\boldsymbol{\theta}}_{n+1|N}-\mathbf{F}_n\hat{\boldsymbol{\theta}}_{n|n}\right)
					\end{align}
				with corresponding error covariance matrix
				\begin{align}\label{KSerror}
					~~~~~~~~~~\mathbf{P}_{n|N}=\mathbf{P}_{n|n}+\mathbf{B}_n\left(\mathbf{P}_{n+1|N}-\mathbf{P}_{n+1|n}\right)\mathbf{B}_n^T.
				\end{align} 
				A key property of the backward KS iteration, is that it improves KF performance using from $\{\mathcal{I}_n\}_{n=1}^N$ only the information encapsulated in the output $\hat{\boldsymbol{\theta}}_{n|n}$ of the forward filter. Therefore, backward iterations can be readily applied on filtered estimates of RP-KF, AC-KF or the US-KF to limit the tracker's performance loss caused by the measurement reduction.

				\begin{algorithm}[t!]
					\caption{The budgeted Kalman smoother (Bud-KS)}\label{algorithm2}
					\begin{algorithmic}
						\For {$n=N-1:0$}
						\If{$\hat{\boldsymbol{\theta}}_{n|n}\in\Theta^S_n$}
						\State $\hat{\boldsymbol{\theta}}_{n|N}=\hat{\boldsymbol{\theta}}_{n|n}$
						\State $\mathbf{P}_{n|N}=\mathbf{P}_{n|n}$
						\Else
						\State $\hat{\boldsymbol{\theta}}_{n|N}=\hat{\boldsymbol{\theta}}_{n|n}+\mathbf{B}_n\left(\hat{\boldsymbol{\theta}}_{n+1|N}-\mathbf{F}_n\hat{\boldsymbol{\theta}}_{n|n}\right)$
						\State $\mathbf{B}_n=\mathbf{P}_{n|n}\mathbf{F}_n^T\mathbf{P}_{n+1|n}^{-1}$
						\State $\mathbf{P}_{n|N}=\mathbf{P}_{n|n}+\mathbf{B}_n\left(\mathbf{P}_{n+1|N}-\mathbf{P}_{n+1|n}\right)\mathbf{B}_n^T$
						\EndIf
						\EndFor
					\end{algorithmic}
				\end{algorithm}
				
				In addition, the backward iteration can also be modified to operate within a limited computational budget. Given the smoothed estimate at time $n+1$, let us define the set
				\begin{align}\label{infoset}
					\Theta^b_n:=\left\{\boldsymbol{\theta}\big{|}\|\hat{\boldsymbol{\theta}}_{n+1|N}-\mathbf{F}_{n}\boldsymbol{\theta}\|_{\mathbf{Q}_n^{-1}}^2\leq{\tau_b}\right\}
				\end{align} 
				of states at time $n$ that are consistent enough with the transition model in the WLS sense. Based on~\eqref{infoset}, the Bud-KS estimate at time $n$ is given as 
				\begin{align}\label{budKS}
					\hat{\boldsymbol{\theta}}_{n|N}=\bigg\{
					\begin{array}{lc}
						\hat{\boldsymbol{\theta}}_{n|n},&{\hat{\boldsymbol{\theta}}_{n|n}\in\Theta^b_n}\\
						\hat{\boldsymbol{\theta}}_{n|n}+\mathbf{B}_n\left(\hat{\boldsymbol{\theta}}_{n+1|N}-\mathbf{F}_n\hat{\boldsymbol{\theta}}_{n|n}\right),&{\hat{\boldsymbol{\theta}}_{n|n}\notin\Theta^b_n}.
					\end{array}
				\end{align} 
				Clearly, for $\hat{\boldsymbol{\theta}}_{n|n}\in\Theta^b_n$, it holds that $\mathbf{P}_{n|N}=\mathbf{P}_{n|n}$; while for $\hat{\boldsymbol{\theta}}_{n|n}\notin\Theta^b_n$, the error covariance is given by~\eqref{KSerror}. Essentially, KS estimates that are consistent enough with the system model are not smoothed, thus saving the computations required. Here, the threshold $\tau_b$ in~\eqref{infoset} can be tuned to control the amount of ``acceptable'' deviation from the model. The novel economical, fixed-interval smoother on a budget, that we abbreviate as Bud-KS, is tabulated as Algorithm~\ref{algorithm2}.
				
				{Regarding the computational complexity of Bud-KS, it is worth noting that implementing the rule (33) in the general case requires $\mathcal{O}(p^3)$ computations in order to invert $\mathbf{Q}_n$. The complexity of Bud-KS updates in (31) and (32) are on the same order of magnitude. Thus, Bud-KS is preferable 
				when the covariance matrix of $\mathbf{w}_n$ is time-invariant,  meaning that $\mathbf{Q}_n=\mathbf{Q}\; \forall n$. In such cases, inversion of $\mathbf{Q}$ is  performed once offline, thus reducing complexity in (33) to $\mathcal{O}(p^2)$; likewise, when $\mathbf{Q}_n$ is diagonal. In such scenarios, an update of $\mathcal{O}(p^3)$ complexity is skipped at the cost of an $O(p^2)$ complexity rule, leading to computational savings that become more significant as $p$ increases.}

\section{Numerical Tests}\label{sec:simulations}

The novel AC-KF, RP-KF, US-KF and Bud-KS algorithms are tested here on a simulated linear dynamical system. For this experiment, a simple state transition model that performs cyclical shifting of the entries of the state was implemented. The state transition matrix is 
 \begin{align*}
 \mathbf{F}_{n,ij}\bigg\{
 \begin{array}{lc}
 1,& i=j-1 \\            
 0,& \mathrm{otherwise}
 \end{array},~~\forall{n}
 \end{align*} 
	and $\mathbf{F}_{1p}=1$, while the state dimension is set to $p=50$. The state noise $\{\mathbf{w}_n\}_{n=1}^N$ was generated i.i.d. with $\mathbf{w}_n\sim{\mathcal{N}(\mathbf{0},\sigma_{w}^2\mathbf{Q}_n)}$, where $\mathbf{Q}_{n,ij}=0.5^{|i-j|}$ and $\sigma_{w}=0.01$. Finally, the initial state is $\boldsymbol{\theta}_0\sim{\mathcal{N}(\mathbf{m}_0,\mathbf{P}_0)}$, with $\mathbf{m}_0$ set to have two non-zero values $20$ and $-30$ in its first and fifth entry, and $\mathbf{P}_0=0.04\mathbf{I}$.
	Per time instant $n\in\{1, \ldots, N\}$ with $N=100$, $D=500$ measurements are obtained and concatenated in $\mathbf{y}_n=\mathbf{X}_n\boldsymbol{\theta}_{n}+\mathbf{v}_n$, where rows of $\mathbf{X}_n$ are generated as i.i.d. standardized Gaussian vectors and then weighted independently by coefficients $\alpha$ drawn from $\alpha\sim{\textrm{Unif}\{0.5,~1.5 \}}$. For this experiment, observations are correlated; thus, $\mathbf{v}_n\sim{\mathcal{N}(0,\sigma_v^2\mathbf{R}_n)}$, where $\mathbf{R}_{n,ij}=0.5^{|i-j|}$. For the following experiments, we set $\sigma_v^2=1$, upon observing that the results remain qualitatively similar for different noise levels.  

	\subsection{AC-KF, RP-KF, and US-KF}
	
	To determine the average performance in terms of estimation error and computational complexity of AC-KF and RP-KF for different values of $d/D$, 20 Monte Carlo realizations were run on the same simulated linear dynamical system. The estimation performance was measured in terms of root mean-square error (RMSE) of the estimates across iterations; that is,{
	\begin{equation*}
		\mathrm{RMSE}=\sqrt{\frac{1}{N}\sum_{n=1}^{N}\|\hat{\boldsymbol{\theta}}_{n|n}-\boldsymbol{\theta}_{n}\|_2^2}.
	\end{equation*} }
	AC-KF and US-KF were run first, with thresholds tuned such that a constant number of approximately $d$ observations were selected per time slot; RP-KF and the greedy algorithm were then set to obtain $d$ measurements per time slot. As a performance benchmark for the three algorithms, KF was also run with $d$ randomly sampled observations per time step.
	
	The average RMSE of the five methods as a function of $d/D$ is plotted in Fig.~\ref{fig:hsnr}. These plots confirm that the proposed data-agnostic RP-KF is useful for increasing the accuracy (compared to plain random sampling) when estimating dynamic processes. With regards to the more elaborate algorithms, AC-KF has comparable performance with the KF using greedy OED measurement selection, while being orders of magnitude faster in terms of runtime. Last but not least, the US-KF with $k=0$ outperforms the other methods while maintaining $\mathcal{O}(dp^2)$ complexity, even when the observation noise is correlated. { Finally, the experiment was re-run with $D=1000$ and for varying $d/D$, with the runtime of the algorithms listed in Table \ref{tab:table_2}. The greedy ODE algorithm is excluded from this experiment since it is an offline benchmark with runtime larger than that of the full-data KF. In comparison to random sampling, the proposed methods carry a certain computational overhead which becomes less significant as $d/D$ (or $D$) increases. More importantly, the proposed algorithms enjoy a significantly lower runtime than the full-data KF.} 
	
	Additional experiments were performed to assess sensitivity of the US-KF to the choice of parameter $k$. Recall that $k$ determines the accuracy of the approximation of $\gamma_{n,i}^k$ (cf.~\eqref{naive}-\eqref{approx_g}), and therefore how accurately the update selection rule in~\eqref{update_rule} is implemented; at the same time, the computational complexity of implementing~\eqref{update_rule} increases with $k$ at a rate of $\mathcal{O}(p(k+1))$. Interestingly, experiments indicate that $k\ll{p}$ can be sufficient in practice, while sensitivity to $k$ only manifests itself for relatively small values of the compression ratio $d/D$. As seen in Fig.~\ref{fig:diff_ks}, RMSE of US-KF with $k=1$ is almost as low as the one achieved with $k=p$, while setting $k=0$ still yields reliable estimates, with the gap becoming smaller as $d/D$ increases. Recall that using $k=0$ leads to the simple rule in~\eqref{bit_naive}, and bears the additional advantage that no eigenpairs of $\mathbf{P}_{n|n-1,i}$ need be tracked.    
	 
	 \begin{table}[t]\label{tab:table2}
	 	\caption{{Average runtime of algorithms for $D=1000$.}}
	 	\vspace{-0.2 cm}
	 	\label{tab:table_2}
	 	\begin{center}
	 		\begin{tabular} {|c|c|c|c|}
	 			\hline
	 			$d/D$ & $0.05$ & $0.13$ & $0.24$ \\ \hline
	 			Random sampling & $0.16$ sec & $0.31$ sec & $0.81$ sec \\ \hline
	 			RP-KF & $0.26$ sec & $0.42$ sec & $1.3$ sec\\ \hline
	 			AC-KF & $0.41$ sec & $0.51$ sec & $1.05$ sec\\ \hline
	 			US-KF & $0.44$ sec & $0.64$ sec & $1.1$ sec\\ \hline 
	 			Full-data KF & $6.7$ sec & $6.7$ sec & $6.7$ sec\\ \hline
	 		\end{tabular}
	 	\end{center}
	 \end{table} 

	\subsection{Bud-KS}
	
In the last experiment, the extent to which backward smoothing iterations can improve reduced-observation filtering was examined. The AC-KF algorithm was first run with $d/D$ ranging from $0.0095$ up to $0.65$; Bud-KF was then run with $\tau_b=0$ in order to smooth all $N$ filtered estimates. Figure~\ref{fig:smoothed} depicts the average RMSE of the AC-KF with and without smoothing. Evidently, smoothing can significantly reduce RMSE over the entire range of dimensionality reduction, while its effect becomes more prominent as $d/D$ decreases. Upon examining Fig.~\ref{fig:smoothed}, the AC-KF using $<1\%$ of the data followed by Bud-KS, attains the same RMSE as the AC-KF using $5\%$ of the data; a surprising five-fold decrease. Thus, at the cost of introducing non-causality (or delay if a fixed-lag KS is used), smoothing offers room for significant decrease in the data requirements and complexity of tracking.
			\begin{figure}[t]
				\centering
				\centerline{\includegraphics[width=1.1\linewidth]{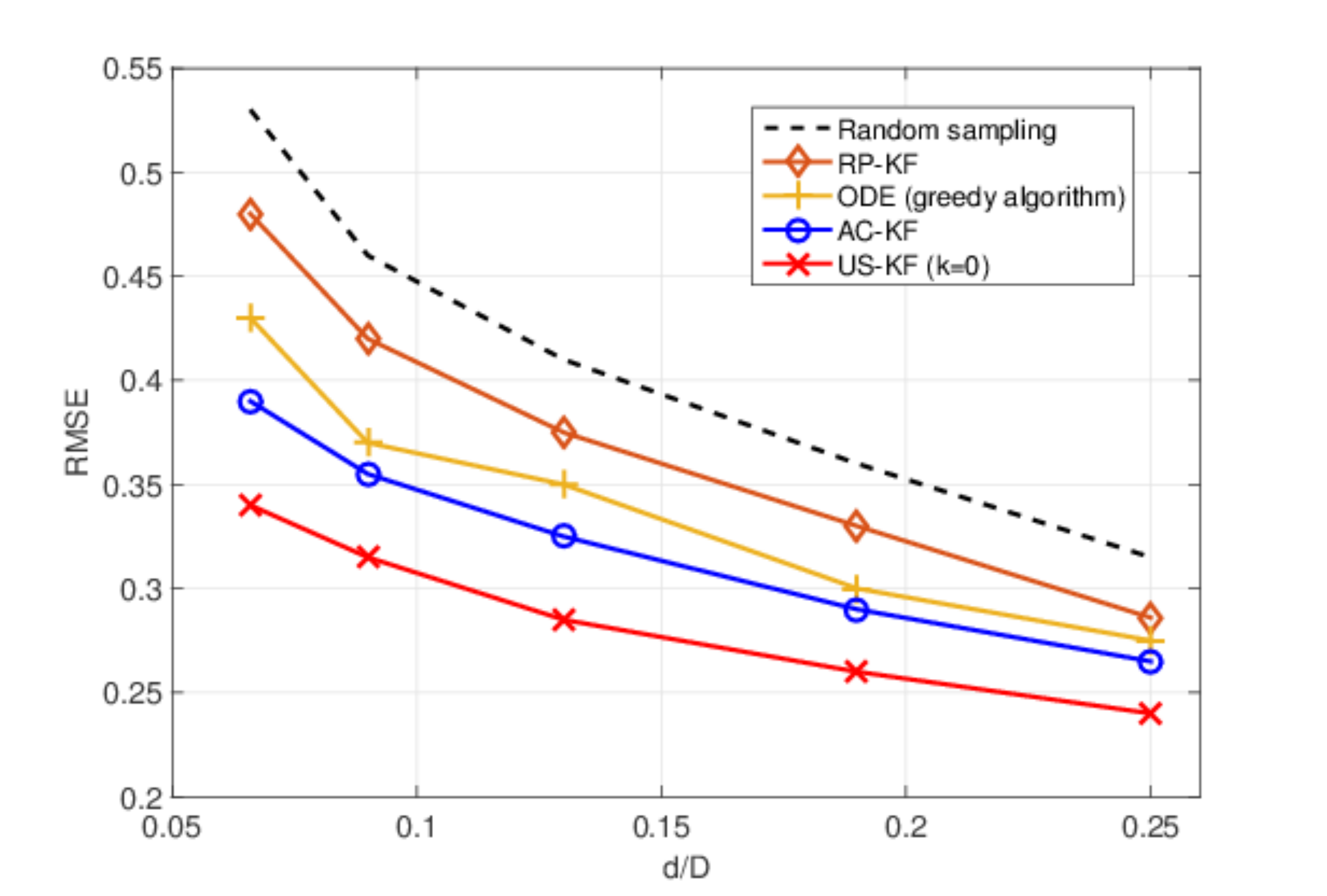}}
				\caption{Average RMSE for the US-KF, AC-KF, Greedy algorithm, RP-KF and random sampling as a function of $d/D$. }\label{fig:hsnr}
			\end{figure}
		\begin{figure}[t]
			\centering
			\centerline{\includegraphics[width=1.1\linewidth]{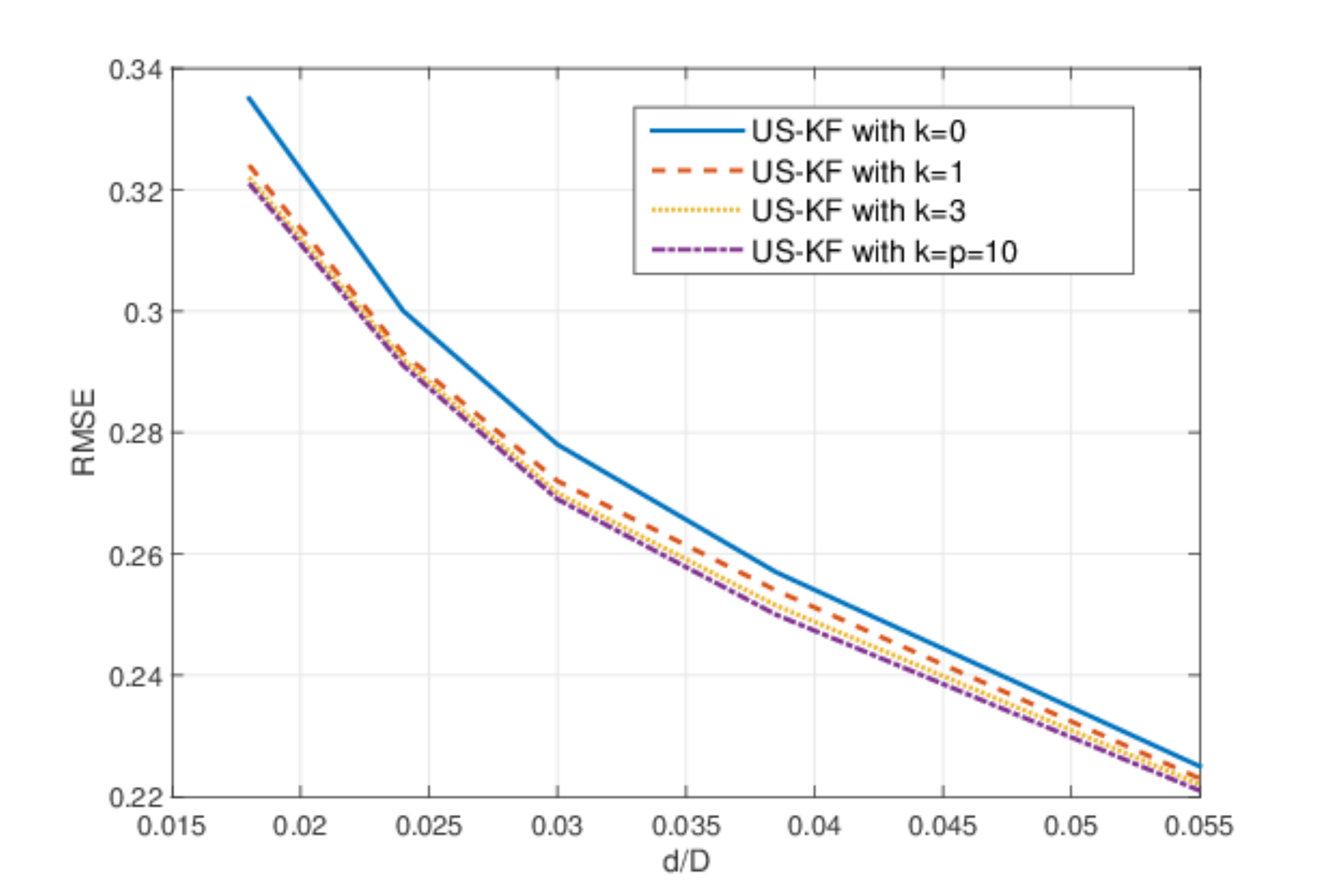}}
			\caption{Average RMSE for US-KF for different values of $k$.}\label{fig:diff_ks}
		\end{figure}
	\begin{figure}[t]
		\centering
		\centerline{\includegraphics[width=1.1\linewidth]{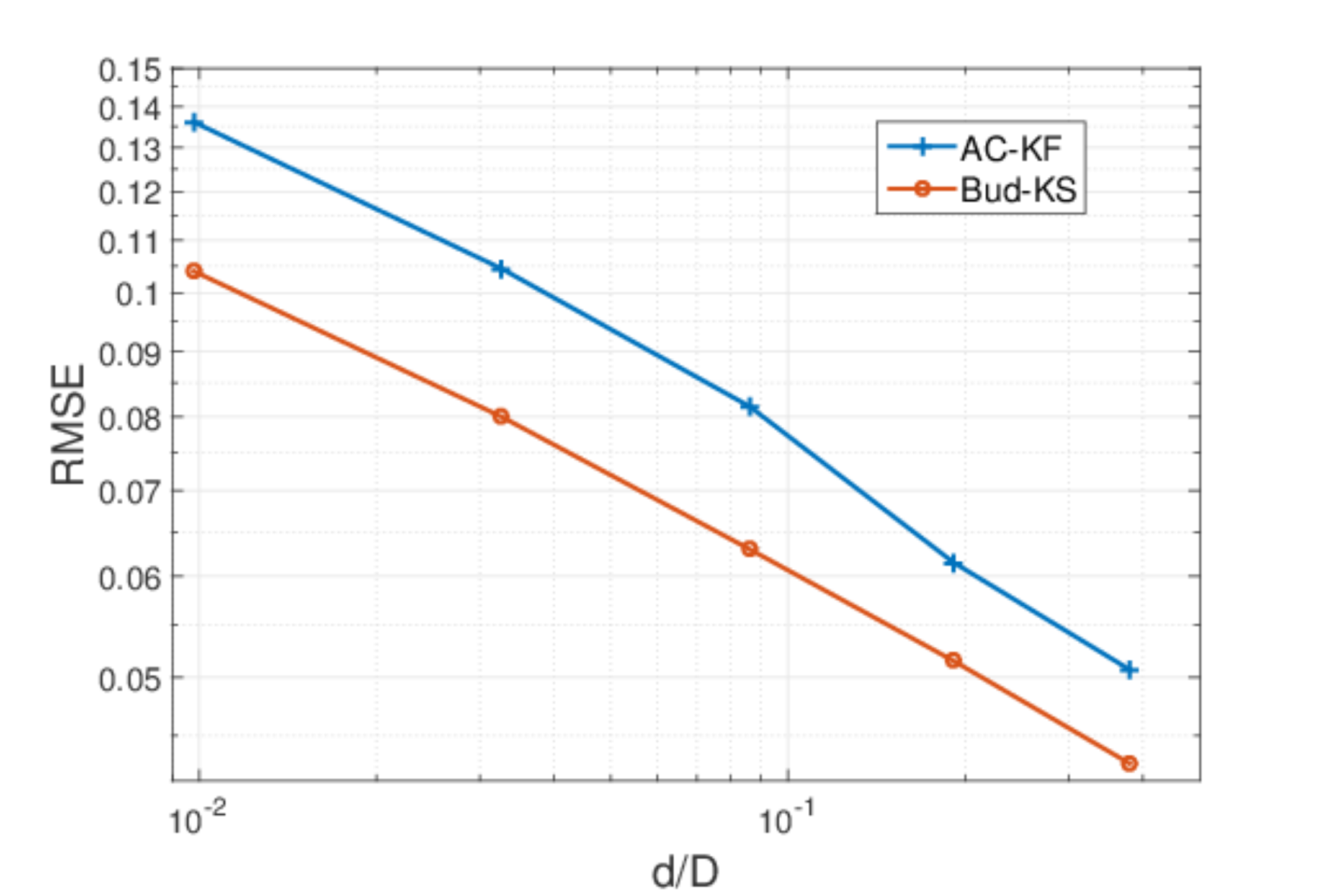}}
		\caption{RMSE of AC-KF versus Bud-KS, as a function of $d/D$.}\label{fig:smoothed}
	\end{figure}

\section{Application to monitoring dynamic graphs}\label{sec:networks}

 Dynamically evolving graphs offer a promising application domain for our proposed algorithms. In this context, measurements are obtained from a graph of known and constant topology in order to infer a set of hidden time-varying properties. Specifically, \emph{traffic matrix estimation} and \emph{link cost estimation} are two tasks that involve tracking of large-scale dynamical processes from linearly obtained observations. To demonstrate the applicability of US-KF in reducing the complexity of such tasks, a Kronecker graph $\mathcal{G}=(\mathcal{V},\mathcal{E})$ with $|\mathcal{V}|=50$ vertices  was generated. The adjacency matrix $\mathbf{A}_k$ of a Kronecker graph can be generated recursively as $\mathbf{A}_k=\mathbf{A}_{k-1}\otimes\mathbf{A}_{k-1}$, and is completely determined by the initiator graph $\mathbf{A}_{1}$. As shown in \cite{leskovec2010kronecker}, Kronecker graphs exhibit many real-word graph properties such as power-law degree distributions, and are thus highly recommended for simulating algorithms. For our experiments, a Kronecker graph was generated with initiator
\begin{align*}
\mathbf{A}_1=
\left[{\begin{array}{ccc}
	1 & 1 & 0 \\
	1 & 1 & 1\\
    0 & 1 & 1\\
	\end{array}}\right]
\end{align*}
until $100$ nodes become available. Nodes adjacent to all other nodes were removed in order to decrease the connectivity of the graph to more realistic levels. The resulting adjacency matrix is depicted in Fig.~\ref{fig:adj}. 

	\begin{figure}[t]
		\centering
		\centerline{\includegraphics[width=0.75\linewidth]{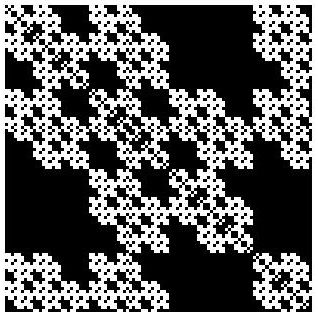}}
		\caption{Adjacency matrix of a Kronecker graph with $100$ nodes.}
		\label{fig:adj}
	\end{figure}

\subsection{Traffic matrix estimation}

Consider the task of measuring the traffic volume at the links of a network, in order to estimate the volume of origin-to-destination (OD) flows, a very important task in many networks ranging from the Internet to transportation. Since OD flows are defined by a set of origins $\mathcal{O}\subseteq{\mathcal{V}}$ and a set of destinations $\mathcal{D}\subseteq{\mathcal{V}}$, they can be represented as the entries an $|\mathcal{O}|\times|\mathcal{D}|$ traffic matrix $\mathbf{F}$. Similar to \cite{rajawat2014dynamic},~\cite{traffic1} and \cite{traffic2}, the following linear state-transition  and observation models is considered
\begin{align}\label{model_tme1}
\mathbf{f}_n&= \mathbf{f}_{n-1}+\mathbf{w}_n\\
\mathbf{l}_n&= \mathbf{R}\mathbf{f}_n + \mathbf{v}_n \label{model_tem2}
\end{align}
where $\mathbf{f}_n:=\mathrm{vec}(\mathbf{F}_n)$ is the vectorized traffic matrix at time slot $n$ that is assumed to evolve according to a random walk with driving Gaussian noise $\mathbf{w}_n$ with known covariance matrix $\sigma_f^2\mathbf{Q}$ such that $\mathbf{Q}_{i,j}=0.2^{|i-j|}$; $\mathbf{l}_n$ contains the link measurements at time slot $n$; and, $\mathbf{v}_n$ is the observation noise with $\mathrm{cov}(\mathbf{v}_n)=\sigma^2\mathbf{I}$. The choice of a non-diagonal $\mathbf{Q}_n$ was made to reflect the fact that flows tend to be highly correlated (see e.g.~\cite{kolaczyk2014statistical}). For this experiment, we set $\sigma_f=0.02$, $\sigma=0.5$, and generated the initial state as $\mathbf{f}_0\sim\mathcal{N}(2\cdot\mathbf{1},\mathbf{Q})$ In this model, the role of the measurement matrix is played by the \emph{routing matrix} $\mathbf{R}\in\{0,1 \}^{|\mathcal{E}|\times{|\mathcal{O}||\mathcal{D}|}}$, each column of which corresponds to an OD flow with entries taking the value $1$, if the corresponding links are part of the flow. Simply put, each column of $\mathbf{R}$ describes the path that the corresponding OD flow takes through the graph. For this experiment, OD paths were chosen to be the shortest possible using Dijkstra's algorithm. To make this experiment even more challenging, flows with paths that consist of a single link were not considered; flows with no sampled links and irrelevant links were also removed from the model. Overall, $189$ edges were sampled in order to track $689$ OD flows. 

Plotted in Fig.~\ref{fig:traffic} is the MSE ($\mathbb{E}[\|\mathbf{f}_n-\hat{\mathbf{f}}_n\|_2^2]$) of the estimated traffic matrix across time, for the proposed US-KF (Alg.\ref{rc_KF}), and the KF with random sampling. The algorithms were run for $N=100$ time slots and the results were averaged across $100$ runs. Both algorithms were tuned to utilize $6\%$ of edge measurements per time slot, and require approximately the same runtime. As seen in the plot, the estimates $\hat{\mathbf{f}}_n$ of the proposed US-KF converge faster than those of the sub-sampled KF, and keep a closer track of the true traffic matrix $\mathbf{f}_n$. It should be noted that, due to the large state dimension, other methods such as the RP-KF or greedy OED become impractical.

	\begin{figure}[t]
		\centering
		\centerline{\includegraphics[width=1.1\linewidth]{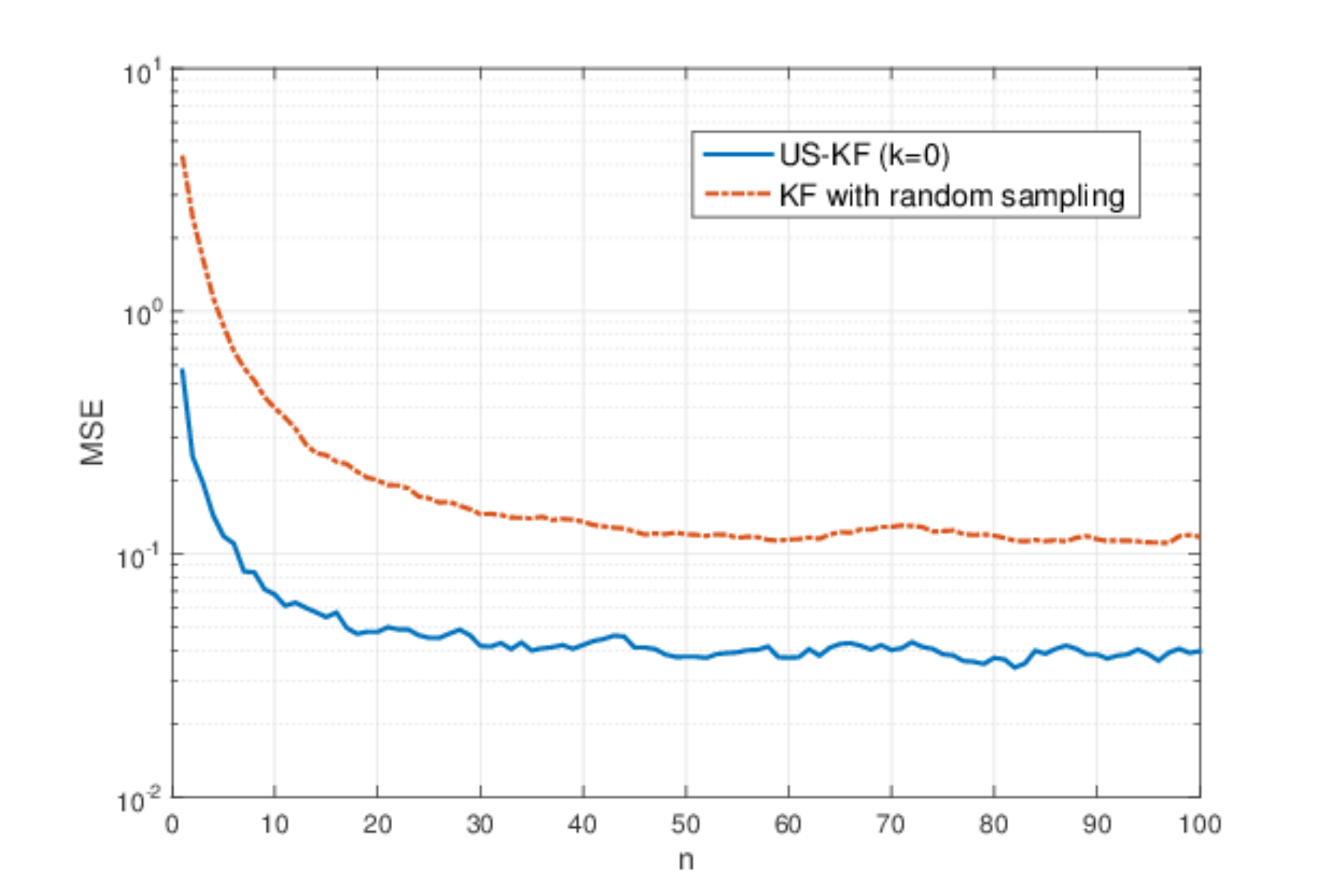}}
		\caption{Traffic matrix MSE vs time plot for the proposed Update selection KF (Alg.\ref{rc_KF}) and the random sampling KF. Both algorithms were tuned to utilize $6\%$ of edge measurements per time slot.}\label{fig:traffic}
	\end{figure}
		\begin{figure}[t]
			\centering
			\centerline{\includegraphics[width=1.1\linewidth]{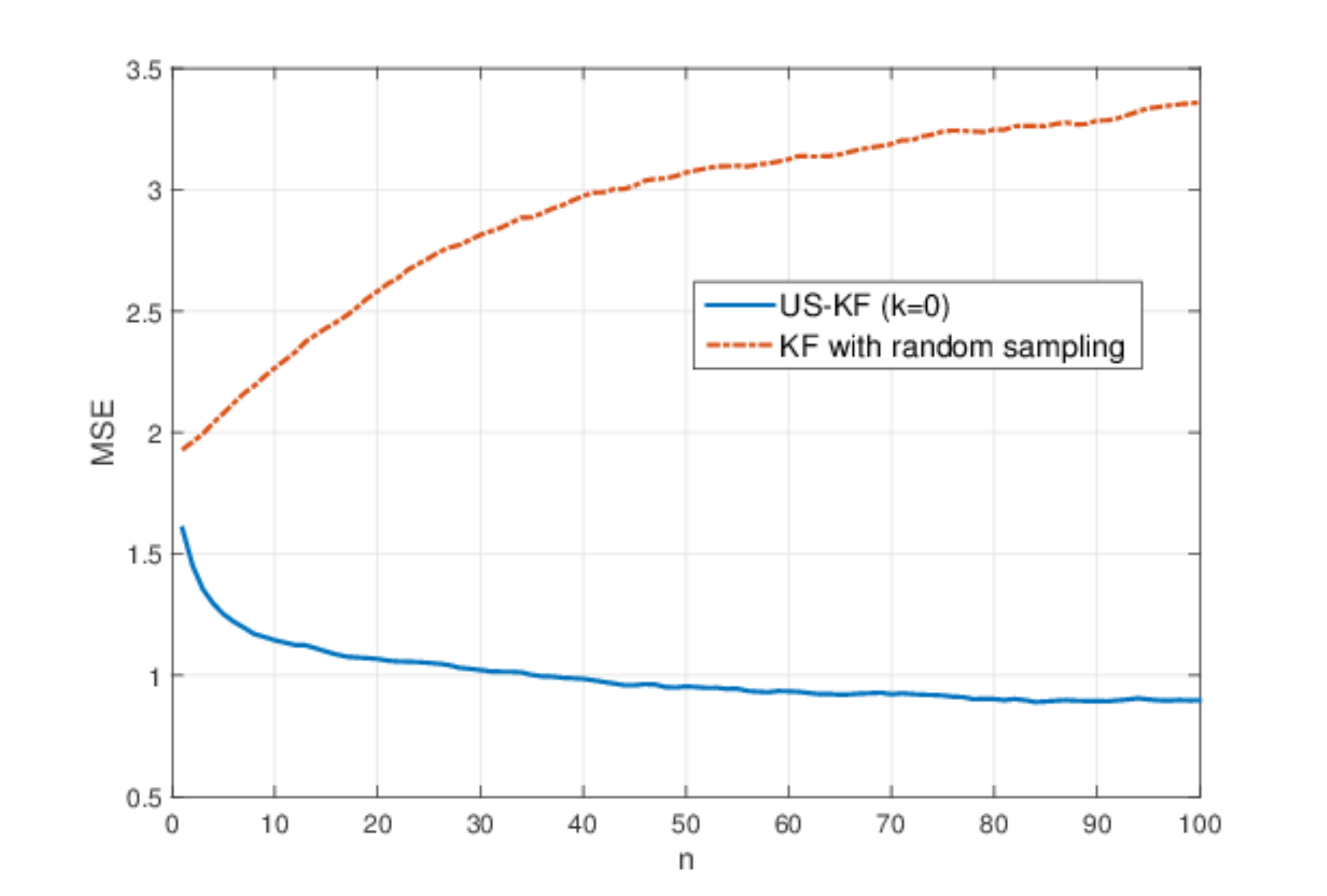}}
			\caption{Link-cost MSE vs time plot for US-KF (Alg.\ref{rc_KF}) and the random sampling KF. Both algorithms were tuned to utilize $4\%$ of flow measurements per time slot.}\label{fig:link}
		\end{figure}

\subsection{Estimation of link costs from path-cost measurements}

Consider now that every edge $\epsilon$ of the graph is associated with a cost $c(\epsilon)$, and that the concatenation of all such costs forms the link cost vector $\mathbf{c}$. A common task associated with networks is inference of $\mathbf{c}$ by measuring path costs $p_{ij}$, where $p_{ij}$ is the total cost of a flow between nodes $v_i$ and $v_j$ (see e.g.,~\cite[Chap. 9.4.1]{kolaczyk2014statistical}). Since $p_{ij}$ is the aggregation of all costs of the edges that the corresponding path crosses, it can be expressed as the inner product between $\mathbf{c}$ and the corresponding row of the routing matrix. Consequently, path costs and link costs are linked through the linear observation model $\mathbf{p}=\mathbf{R}^T\mathbf{c}$, where $\mathbf{p}$ is the vector with all the available path cost measurements. Considering dynamic graphs where the link costs $\mathbf{c}_n$ and path costs $\mathbf{p}_n$ evolve across time slots $n$, leads to the familiar linear state-transition and state-observation models
\begin{align}\label{model_link1}
\mathbf{c}_n&= \mathbf{c}_{n-1}+\mathbf{w}_n\\
\mathbf{p}_n&= \mathbf{R}^T\mathbf{c}_n + \mathbf{v}_n \label{model_link2}
\end{align}
where $\mathbf{w}\sim{\mathcal{N}(\mathbf{0},\sigma_c^2\mathbf{I})}$,  $\mathbf{v}\sim{\mathcal{N}(\mathbf{0},\sigma^2\mathbf{I})}$, and the initial state is $\mathbf{c}_0\sim{\mathcal{N}(\mathbf{m},\sigma_0^2\mathbf{I})}$. For this experiment, we used the same graph and routing matrix as in the traffic estimation experiment, and generated $\mathbf{c}_n$ and $\mathbf{p}_n$ according to~\eqref{model_link1} and~\eqref{model_link2} correspondingly, with $\sigma_c=0.04$, $\sigma=0.1$, $\mathbf{m}=\mathbf{1}$ and $\sigma_0=0.1$.

Plotted in Fig.~\ref{fig:link} is the MSE ($\mathbb{E}[\|\mathbf{c}_n-\hat{\mathbf{c}}_n\|_2^2]$) of the estimated link costs across time, for the proposed US-KF (Alg.\ref{rc_KF}) and the KF with random sampling, for $N=100$ time slots and averaged across $100$ runs. Both algorithms were tuned to utilize $4\%$ of path cost measurements per time slot, and require approximately the same runtime. As seen in the plot, the proposed US-KF successfully tracks the slowly evolving link costs by judiciously selecting and using a small fraction of the available path cost observations. Furthermore, it can be observed that if the same fraction ($4\%$) of measurements is selected at random, then the KF fails to track the link costs, with its estimate diverging from the true value as time progresses. The divergence of the KF with random sampling is consistent with the results in~\cite{sinopoli2004kalman}, where it is shown that there exists a cut-off value for the data rate, below which the error covariance may become unbounded. Interestingly, the proposed reduced-complexity US-KF appears to be much more robust to divergence; as discussed in the following remark. 

{\textbf{Remark 4:} While KF based on random sampling (as well RP-KF) diverges when the compression ratio $d/D$ becomes smaller than a certain threshold, this is not the case for the advocated censoring-based alternatives (AC-KF and US-KF) since diverging estimates prohibit censoring. This becomes evident upon realizing that a diverging estimate (i.e., $\|\hat{\boldsymbol{\theta}}_n-\boldsymbol{\theta}_n\|_2\rightarrow \infty$) would imply infinitely large innovations that cannot be smaller than finite thresholds such as the ones used in censoring rules \eqref{entry_censoring} and \eqref{update_rule}.  This in turn implies that if AC-KF and US-KF were divergent, they would become equivalent to the full data KF.  In a nutshell, if the ordinary KF is not divergent, the same holds for the proposed AC-KF and US-KF, since the latter will always obtain sets of observations that guarantee a bounded tracking error. }

\section{Concluding Remarks}\label{sec:conclusion}

We introduced random projections and censoring as dimensionality reduction and measurement selection methods for tracking dynamical processes with generally time-varying parameters. The proposed methods are simple routines that can be used as dimensionality reduction modules coupled with an ordinary KF. Furthermore, we introduced a reduced-complexity KF that processes measurements sequentially and performs updates that are deemed informative based on the information gain of corresponding measurements.  Performance was not analytically performed, but simulations provide surprisingly strong evidence that the proposed methods perform close to the greedy measurement selection method in terms of estimation error. Furthermore, censoring-based measurement selection enjoys much lower computational complexity than greedy OED. To demonstrate applicability of the proposed update selection approach on real-world problems, we examined the network-related applications of traffic matrix estimation and network flow estimation.  

\appendix
\begin{IEEEproof}[Proof of Proposition~\ref{pro:rps}]
	
Follows readily from Theorem 2 in~\cite{drineas2011faster}.	

\end{IEEEproof}

\begin{IEEEproof}[Proof of Proposition~\ref{pro:bias}]

From the assumption of large and uncorrelated noise $\mathbf{R}_n=\sigma_n^2\mathbf{I}$, the inverse reduced innovation covariance matrix can be approximated as
\begin{equation*}
\left(\check{\mathbf{X}}_n\mathbf{P}_{n|n-1}\check{\mathbf{X}}_n^T+\check{\mathbf{R}}_n\right)^{-1}\approx{\sigma_n^{-2}\mathbf{I}}
\end{equation*}
and hence the correction update as 
\begin{align}\label{this1}
\hat{\boldsymbol{\theta}}_{n|n}=\hat{\boldsymbol{\theta}}_{n|n-1}+\sigma_n^{-2}\mathbf{P}_{n|n-1}\sum_{i=1}^D\mathbf{x}_{n,i}\tilde{y}_{n,i}(1-c_{n,i})
\end{align}
where $\tilde{y}_{n,i}:=\mathbf{x}_{n,i}^T({\boldsymbol{\theta}}_{n}-\hat{\boldsymbol{\theta}}_{n|n-1})+v_{n,i}$. Furthermore, for $\mu=0$ the censoring rule in~\eqref{cis} simplifies to $1-c_{n,i}=\mathbbm{1}_{|\{\tilde{y}_{n,i}|\geq{\tau_n\sigma_{n}}\}}$, where $\tilde{y}_{n,i}\sim{\mathcal{N}(0,\mathbf{x}_{n,i}^T\mathbf{P}_{n|n-1,i}\mathbf{x}_{n,i}+\sigma_n^2})$. If $\hat{\boldsymbol{\theta}}_{n-1|n-1}$ is unbiased, then it readily follows that $\hat{\boldsymbol{\theta}}_{n|n-1}$ is also unbiased, and~\eqref{this1} yields
\begin{align}\label{this2}
\mathbb{E}[\hat{\boldsymbol{\theta}}_{n|n}-\boldsymbol{\theta}_n]=\sigma_n^{-2}\mathbf{P}_{n|n-1}\sum_{i=1}^D\mathbf{x}_{n,i}\mathbb{E}[\tilde{y}_{n,i}(1-c_{n,i})].
\end{align}
Since 
\begin{align}\nonumber
\mathbb{E}[\tilde{y}_{n,i}(1-c_{n,i})]&=\mathbb{E}[\tilde{y}_{n,i}\mathbbm{1}_{|\{\tilde{y}_{n,i}|\geq{\tau_n\sigma_{n}}\}}]\\\nonumber
&=\mathbb{E}[\tilde{y}_{n,i}]-\mathbb{E}[\tilde{y}_{n}\mathbbm{1}_{|\{\tilde{y}_{n,i}|\leq{\tau_n\sigma_{n,i}}\}}]\\\nonumber
&\propto-\int_{-\tau_n\sigma_{n}}^{\tau_n\sigma_{n}}\tilde{y}_{n,i}e^{-c(\tilde{y}_{n,i})^2}d({\tilde{y}_{n,i}})\\\label{this3}
&=\frac{1}{2c}(e^{-c\tau_n^2\sigma_{n}^2}-e^{-c(-\tau_n\sigma_{n})^2})=0
\end{align} 
 where $c:=0.5(\mathbf{x}_{n,i}^T\mathbf{P}_{n|n-1,i}\mathbf{x}_{n,i}+\sigma_n^2)^{-1}$, it follows from~\eqref{this3} and~\eqref{this2} that $\mathbb{E}[\hat{\boldsymbol{\theta}}_{n|n}-\boldsymbol{\theta}_n]=\mathbf{0}$, and the AC-KF is unbiased. 

\end{IEEEproof}

\begin{IEEEproof}[Proof of Proposition~\ref{pro:dkl}]

For observations generated according to the linear Gaussian model, and since $\hat{\boldsymbol{\theta}}_{n|n,i}$ is the MMSE estimator of $\boldsymbol{\theta}_n$ given $\hat{\boldsymbol{\theta}}_{n|n-1}$ and $\mathbf{y}_{n,{1:i}}$, it follows that the posterior of $\boldsymbol{\theta}_n$ is also Gaussian with $p_{n,i}({\boldsymbol{\theta}}_{n})=\mathcal{N}\left(\hat{\boldsymbol{\theta}}_{n|n,i},\mathbf{P}_{n|n,i}\right)$. Similarly, one can obtain $p_{n,i-1}({\boldsymbol{\theta}}_{n})=\mathcal{N}\left(\hat{\boldsymbol{\theta}}_{n|n,i-1},\mathbf{P}_{n|n,i-1}\right)$. Using the closed-form identity for the KL divergence between two multivariate normal pdfs, we arrive at
\begin{align}\nonumber
 \mathcal{D}_{KL}(p_{n,i}|| p_{n,i-1})&=\frac{1}{2}\bigg[\mathrm{tr}\left(\mathbf{P}_{n|n,i-1}^{-1}\mathbf{P}_{n|n,i}\right)\\\nonumber
 &+ (\hat{\boldsymbol{\theta}}_{n|n,i-1}-\hat{\boldsymbol{\theta}}_{n|n,i})^T\mathbf{P}_{n|n,i-1}^{-1}\\\nonumber
 &\times(\hat{\boldsymbol{\theta}}_{n|n,i-1}-\hat{\boldsymbol{\theta}}_{n|n,i})\\\label{ayto_edw}
 &-p + \ln\left(\frac{|\mathbf{P}_{n|n,i-1}|}{|\mathbf{P}_{n|n,i}|}\right)\bigg]
\end{align}
where $\mathrm{tr}(\mathbf{P})$ denotes the trace of matrix $\mathbf{P}$ and $|\mathbf{P}|$ its determinant.

Using \eqref{covar_update}, the first summand in~\eqref{ayto_edw} can be expressed as
\begin{align}\nonumber
\mathrm{tr}&\left(\mathbf{P}_{n|n,i-1}^{-1}\mathbf{P}_{n|n,i}\right)\\\nonumber
&=\mathrm{tr}\left(\mathbf{I}_p-\mathbf{x}_{n,i}\mathbf{x}_{n,i}^T\mathbf{P}_{n|n,i-1}s_{n,i}^{-1}\right)\\\label{tr}
&=p-\mathbf{x}_{n,i}^T\mathbf{P}_{n|n,i-1}\mathbf{x}_{n,i}s_{n,i}^{-1}.
\end{align}	

Upon observing that for the RLS-like iteration in \eqref{par_update} the inverse of the covariance matrix is updated as
\begin{align}\label{inv_cov}
 \mathbf{P}_{n|n,i}^{-1}=\mathbf{P}_{n|n,i-1}^{-1}+\mathbf{x}_{n,i}\mathbf{x}_{n,i}^T\sigma_i^{-2}
 \end{align}
  the fourth summand in~\eqref{ayto_edw} can be expressed as
\begin{align}\nonumber
\ln\left(\frac{|\mathbf{P}_{n|n,i-1}|}{|\mathbf{P}_{n|n,i}|}\right)&=\ln\left({|\mathbf{P}_{n|n,i-1}|}\right)+\ln\left({|\mathbf{P}_{n|n,i}^{-1}|}\right)\\\nonumber
&=\ln\left({|\mathbf{P}_{n|n,i-1}|}\right)\\ \nonumber
&+\ln\left({|\mathbf{P}_{n|n,i-1}^{-1}+\mathbf{x}_{n,i}\mathbf{x}_{n,i}^T\sigma_i^{-2}|}\right)\\\nonumber
&=\ln\left({|\mathbf{P}_{n|n,i-1}|}\right)\\ \nonumber
&+\ln\left({|\mathbf{P}_{n|n,i-1}^{-1}+\mathbf{x}_{n,i}\mathbf{x}_{n,i}^T\sigma_i^{-2}|}\right)\\ \nonumber
&=\ln\left({|\mathbf{P}_{n|n,i-1}|}\right)\\ \nonumber
&+\ln\left(|\mathbf{P}_{n|n,i-1}^{-1}|\left(1+\mathbf{x}_{n,i}^T\mathbf{P}_{n|n,i-1}\mathbf{x}_{n,i}\sigma_i^{-2}\right)\right)\\ \nonumber
&=\ln\left(1+\mathbf{x}_{n,i}^T\mathbf{P}_{n|n,i-1}\mathbf{x}_{n,i}\sigma_i^{-2}\right)\\\label{fourth}
&=\ln(s_{n,i})-\ln(\sigma_i^2)
\end{align}	
where in the first equality we used the fact that $|\mathbf{P}^{-1}|=1/|\mathbf{P}|$, and in the fourth one we applied the matrix determinant lemma for rank-one updates.
	
Finally, since $\hat{\boldsymbol{\theta}}_{n|n,i-1}-\hat{\boldsymbol{\theta}}_{n|n,i}=-\mathbf{k}_{n,i}e_{n,i}$, the second summand in~\eqref{ayto_edw} becomes
\begin{align}\nonumber
(\hat{\boldsymbol{\theta}}_{n|n,i-1}-\hat{\boldsymbol{\theta}}_{n|n,i})^T&\mathbf{P}_{n|n,i-1}^{-1}(\hat{\boldsymbol{\theta}}_{n|n,i-1}-\hat{\boldsymbol{\theta}}_{n|n,i})\\ \nonumber
&=(\mathbf{k}_{n,i}e_{n,i})^T\mathbf{P}_{n|n,i-1}^{-1}\mathbf{k}_{n,i}e_{n,i}\\ \label{second}
&=e_{n,i}^2\mathbf{x}_{n,i}^T\mathbf{P}_{n|n,i-1}\mathbf{x}_{n,i}s_{n,i}^{-2}.
\end{align}	
	
Substituting \eqref{tr}-\eqref{second} into~\eqref{ayto_edw} and with $\gamma_{n,i}=\mathbf{x}_{n,i}^T\mathbf{P}_{n|n,i-1}\mathbf{x}_{n,i}$, we arrive at the result of Proposition \ref{pro:dkl}.		
\end{IEEEproof}

\begin{IEEEproof}[Proof of Proposition~\ref{pro:sym_dkl}]
	By the definition of $ \mathcal{D}(p_{n,i}, p_{n,i-1})$ in \eqref{sym_dkl} and expressing $ \mathcal{D}_{KL}(p_{n,i-1}|| p_{n,i})$ using arguments similar to \eqref{tr} and \eqref{fourth}, it follows that 
	\begin{align}\nonumber
    \mathcal{D}(p_{n,i}, p_{n,i-1})&=\frac{1}{2}(\hat{\boldsymbol{\theta}}_{n|n,i-1}-\hat{\boldsymbol{\theta}}_{n|n,i})^T\\\nonumber
    &\times\left(\mathbf{P}_{n|n,i-1}^{-1}+\mathbf{P}_{n|n,i}^{-1}\right)\\
    &\times(\hat{\boldsymbol{\theta}}_{n|n,i-1}-\hat{\boldsymbol{\theta}}_{n|n,i}).
	\end{align}	
	Utilizing \eqref{inv_cov} and that $\hat{\boldsymbol{\theta}}_{n|n,i-1}-\hat{\boldsymbol{\theta}}_{n|n,i}=-\mathbf{k}_{n,i}e_{n,i}$ yields
	
	\begin{align}\nonumber
	\mathcal{D}(p_{n,i}, p_{n,i-1})&=\frac{1}{2}e_{n,i}^2\mathbf{k}_{n,i}^T\\ \nonumber
	&\times\left(2\mathbf{P}_{n|n,i-1}^{-1}+\mathbf{x}_{n,i}\mathbf{x}_{n,i}^T\sigma_i^{-2}\right)\mathbf{k}_{n,i} \\ \nonumber
	&=\frac{1}{2}\frac{e_{n,i}^2}{s_{n,i}^2}\bigg(\mathbf{x}_{n,i}^T\mathbf{P}_{n|n,i-1}\mathbf{x}_{n,i}\\\nonumber
	&+(\mathbf{x}_{n,i}^T\mathbf{P}_{n|n,i-1}\mathbf{x}_{n,i})^2\sigma_i^{-2}\bigg)
	\end{align}	
	and since $\gamma_{n,i}:=\mathbf{x}_{n,i}^T\mathbf{P}_{n|n,i-1}\mathbf{x}_{n,i}$ and $\bar{e}_{n,i}:=e_{n,i}s_{n,i}^{-1/2}$ the proposition holds.
\end{IEEEproof}	

\begin{IEEEproof}[Proof of Proposition~\ref{pro:delta_mse}]
  
  Recalling that $e_{n,i}:=y_{n,i}-\mathbf{x}_{n,i}^T\hat{\boldsymbol{\theta}}_{n|n,i-1}$ and $y_{n,i}=\mathbf{x}_{n,i}^T\boldsymbol{\theta}_n +v_{n,i}$, \eqref{1sto_update}	 yields
  	\begin{align}\nonumber
\hat{\boldsymbol{\theta}}_{n|n,i}&=\hat{\boldsymbol{\theta}}_{n|n,i-1}-\mu_{n,i}\mathbf{x}_{n,i}\mathbf{x}_{n,i}^T(\hat{\boldsymbol{\theta}}_{n|n,i-1}-\boldsymbol{\theta}_n)\\\label{theta}
&+\mu_{n,i}\mathbf{x}_{n,i}v_{n,i}.
  	\end{align}
  With $\tilde{\boldsymbol{\theta}}_{n,i}:=\hat{\boldsymbol{\theta}}_{n|n,i}-\boldsymbol{\theta}_n$ denoting the error vector,~\eqref{theta} can be expressed as
    	\begin{align}\label{zeta}
    	\tilde{\boldsymbol{\theta}}_{n,i}=\left( \mathbf{I}_p- \mu_{n,i}\mathbf{x}_{n,i}\mathbf{x}_{n,i}^T \right)\tilde{\boldsymbol{\theta}}_{n,i-1}+\mu_{n,i}\mathbf{x}_{n,i}\mathbf{v}_{n,i}. 
    	\end{align}
  The outer product of both sides in \eqref{zeta} yields
      	\begin{align}\nonumber
      	\tilde{\boldsymbol{\theta}}_{n,i}\tilde{\boldsymbol{\theta}}_{n,i}^T&=\left( \mathbf{I}_p- \mu_{n,i}\mathbf{x}_{n,i}\mathbf{x}_{n,i}^T \right)\tilde{\boldsymbol{\theta}}_{n,i-1}\\\nonumber
      	&\times\tilde{\boldsymbol{\theta}}_{n,i-1}^T\left( \mathbf{I}_p- \mu_{n,i}\mathbf{x}_{n,i}\mathbf{x}_{n,i}^T \right)\\\nonumber
      	&+2\left( \mathbf{I}_p- \mu_{n,i}\mathbf{x}_{n,i}\mathbf{x}_{n,i}^T \right)\tilde{\boldsymbol{\theta}}_{n,i-1}\mu_{n,i}\mathbf{x}_{n,i}^T\mathbf{v}_{n,i}\\\label{full}
      	&+(\mu_{n,i})^2\mathbf{x}_{n,i}\mathbf{x}_{n,i}^T(\mathbf{v}_{n,i})^2.
      	\end{align}
  Since $\hat{\boldsymbol{\theta}}_{n|n,i-1}$ is unbiased, it follows that  $\hat{\boldsymbol{\theta}}_{n|n,i}$ is unbiased too, and therefore the MSE equals the trace of the covariance matrix. Since the expected value of the second summand in \eqref{full} is zero, the trace of the expectation in \eqref{full} yields
        	\begin{align}\nonumber
       \mathrm{tr}\left(\bar{\mathbf{P}}_{n|n,i}\right)&=\mathrm{tr}\left(\left( \mathbf{I}_p- \mu_{n,i}\mathbf{x}_{n,i}\mathbf{x}_{n,i}^T \right)^2\mathbf{P}_{n|n,i-1}\right)\\\nonumber
       &+\mu_{n,i}^2\|\mathbf{x}_{n,i}\|_2^2\sigma_i^2\\\nonumber
       &=\mathrm{tr}\left(\mathbf{P}_{n|n,i-1}\right)+\mu_{n,i}^2\mathrm{tr}\left((\mathbf{x}_{n,i}\mathbf{x}_{n,i}^T)^2\mathbf{P}_{n|n,i-1}\right)\\\label{last}
       &-2\mu_{n,i}\mathrm{tr}\left(\mathbf{x}_{n,i}\mathbf{x}_{n,i}^T\mathbf{P}_{n|n,i-1}\right)
    +\mu_{n,i}^2\|\mathbf{x}_{n,i}\|_2^2\sigma_i^2.
        	\end{align}
        	where $\bar{\mathbf{P}}_{n|n,i}$ is the covariance matrix after the first-order update in \eqref{1sto_update}. Given that $\Delta_n(\mu_{n,i}):=\mathrm{tr}\left(\bar{\mathbf{P}}_{n|n,i}\right)-\mathrm{tr}\left(\mathbf{P}_{n|n,i-1}\right)$, and upon observing that $\mathrm{tr}\left(\mathbf{x}_{n,i}\mathbf{x}_{n,i}^T\mathbf{P}_{n|n,i-1}\right)=\gamma_{n,i}$ and $\mathrm{tr}\left((\mathbf{x}_{n,i}\mathbf{x}_{n,i}^T)^2\mathbf{P}_{n|n,i-1}\right)=\|\mathbf{x}_{n,i}\|_2^2\gamma_{n,i}$, the proof is complete after using \eqref{last}.
\end{IEEEproof}	

\bibliographystyle{IEEEtran}
\bibliography{ref_censoring}
\end{document}